\newcommand{\ket}[1]{\vert #1 \rangle}
\newcommand{\bra}[1]{\langle #1 \vert}
\newcommand{\operator}[1]{\mathrm{#1}}
\newcommand{\PreserveBackslash}[1]{\let\temp=\\#1\let\\=\temp}
\newcolumntype{C}[1]{>{\PreserveBackslash\centering}p{#1}}
\theoremstyle{plain}
\newtheorem{theorem}{THEOREM}
\newtheorem{proposition}[theorem]{PROPOSITION}
\newtheorem{corollary}[theorem]{COROLLARY}
\begin{document}

\title{Grover Walks on a Line with Absorbing Boundaries}

\author{Kun Wang}
\affiliation{State Key Laboratory for Novel Software Technology, Jiangsu 210093, China}
\affiliation{Department of Computer Science and Technology,
                Nanjing University, Jiangsu 210093, China}
\author{Nan Wu}
\email{nwu@nju.edu.cn}
\affiliation{State Key Laboratory for Novel Software Technology, Jiangsu 210093, China}
\affiliation{Department of Computer Science and Technology,
                Nanjing University, Jiangsu 210093, China}
\author{Parker Kuklinski}
\affiliation{Department of Mathematics, Boston University, Boston, MA 02215, USA}
\author{Ping Xu}
\affiliation{National Lab of Solid State Microstructures, Jiangsu 210093, China}
\author{Haixing Hu}
\affiliation{National Lab of Solid State Microstructures, Jiangsu 210093, China}
\author{Fangmin Song}
\affiliation{State Key Laboratory for Novel Software Technology, Jiangsu 210093, China}
\affiliation{Department of Computer Science and Technology,
                Nanjing University, Jiangsu 210093, China}
\begin{abstract}
In this paper, we study Grover walks on a line with one and two absorbing boundaries.
In particular, we present some results for the absorbing probabilities
both in a semi-finite and finite line.
Analytical expressions for these absorbing probabilities are presented by using the
combinatorial approach. These results are perfectly matched with numerical simulations.
We show that the behavior of Grover walks on a line with absorbing boundaries
is strikingly different from that of classical walks and that of Hadamard walks.
\end{abstract}

\maketitle

\section{Introduction}\label{sec:introduction}

Since the seminal work by~\cite{aharonov1993quantum},
quantum walks have been the subject of research for the past two decades.
They were originally proposed as a quantum generalization of classical
random walks~\cite{spitzer2013principles}.
Asymptotic properties such as mixing time, mixing rate and hitting time
of quantum walks on a line and on general graphs have been studied extensively
~\cite{ambainis2001one, aharonov2001quantum,
moore2002quantum, childs2004spatial, krovi2006hitting}.
Applications of quantum walks for quantum information processing have also been investigated.
Especially, quantum walks can solve the element distinctness problem
~\cite{aaronson2004quantum, ambainis2007quantum} and
perform the quantum search algorithms~\cite{szegedy2004quantum}.
In some applications, quantum walks based algorithms can even gain
exponential speedup over all possible classical algorithms~\cite{childs2003exponential}.
The discovery of their capability for universal quantum
computations~\cite{childs2009universal, lovett2010universal}
indicates that understanding quantum walks is necessary for
better understanding quantum computing itself.
For a more comprehensive review, we refer the readers to
~\cite{kempe2003quantum, venegas2012quantum} and the references within.

One dimensional three state quantum walks, first considered by
Inui et al.~\cite{inui2005one}, are variations of two state quantum walks on a line.
In the three state walk, the walker is governed by a coin with three degrees of freedom.
In each step, the walker is not only capable of moving left or right,
but also able to stay at the same position.
Three state quantum walks have interesting differences from two state quantum walks.
Most notably, if the walker of a three state quantum walk is initialized at one site,
it is trapped with large probability near the origin after walking enough steps~\cite{inui2005localization}.
This phenomenon is previously found in quantum walks on
square lattices~\cite{inui2004localization} and is called localization.
In fact, this model is the simplest model that exhibits localization,
a quantum effect entirely absent from the corresponding classical random walk.
Three state random walks are essentially regarded as
the same process as two state random walks by scaling the time.
A thorough understanding of the localization effect on this model
becomes particularly relevant given the fact that this phenomenon
is commonplace in higher-dimensional systems~\cite{venegas2012quantum}.
Recent researches showed that the localization effect happens with a broad family of coin
operators in three state quantum walks~\cite{vstefavnak2012continuous, vstefavnak2014stability, vstefavnak2014limit}.

The presence of absorbing boundaries apparently complicates the analysis of
three state quantum walks considerably.
In this paper, we focus on the question of determining the absorbing
probabilities in Grover walks with one and two boundaries.

First, we consider the case where we have a single absorbing boundary.
The walk process is terminated if the walker reaches that boundary.
We offer methods to calculate the absorbing probability for an arbitrary boundary.
When the boundary is fixed at $-1$, it is known that in the classical case
the walker is absorbed with probability 1~\cite{motwani2010randomized},
while in Hadamard walks the walker has an absorption
probability of $2/\pi$~\cite{ambainis2001one}.
Intuitively, as some probability amplitudes are trapped near origin
due to the localization effect in Grover walks, the absorbing probability
of Grover walks is smaller than that of Hadamard walks.
However, the Grover walker is absorbed with probability $0.6693$ which is larger than $2/\pi$.
What's more, when the boundary is moved from $-1$ to $-2$,
the absorbing probabilities suffer an extreme fast decrease.
To explain these strange behaviors,
we numerically study the oscillating localization effect in Grover walks with one boundary.
We find that the localization is occurred
owing to the quantum state oscillating between $-1$ and $0$.
If the boundary is at $-1$, the localization effect disappears and the state is absorbed,
resulting in a large absorbing probability.
If the boundary is at $-2$, the localization effect revives and
the absorbing probability plummets.

Then, we review the case where there are two boundaries -
one is at site $-M$ ($M > 0$) to the walker's left,
and the other is at site $N$ ($N > 0$) to the walker's right.
The walk is terminated if the walker is trapped in either absorbing boundary.
Methods are designed to calculate the left and right absorbing probabilities to
arbitrary accuracy for arbitrary left and right boundaries.
In Hadamard walks, the left absorbing probability
approaches $1/\sqrt{2}$~\cite{ambainis2001one} when the left boundary is at $-1$ and
the right boundary approaches infinity.
It is concluded that adding a second boundary on the right
actually increases the probability of reaching the left~\cite{ambainis2001one}.
In Grover walks with two boundaries,
we get the same left absorbing probability $1/\sqrt{2}$ under the same setting.
The conclusion is still correct in Grover walks as $1/\sqrt{2} > 0.6693$.
When the left boundary is at $-1$, the oscillating localization effect disappears.
As position $-1$ is occupied, the part of quantum state,
which would have otherwise localized, now is absorbed.
When the left boundary is to the left of $-1$,
the sum of the left and right absorbing probabilities is generally less than $1$
due to oscillating localization.
When studying the case where the left boundary is at $-2$,
we show that the localization probabilities are exponentially decaying
in Grover walks with two boundaries.

The rest of this paper is organized as follows.
Section \ref{sec:definition} gives formal definitions of Grover walks with
one and two absorbing boundaries, and the absorbing probabilities which we study.
Section \ref{sec:one-boundary} and Section \ref{sec:two-boundaries}
present the methods and results on Grover walks with one and two boundaries.
Finally, we conclude in Section \ref{sec:conclusion}.

\section{Definitions}\label{sec:definition}

\subsection{Grover walks}
The three state quantum walk (3QW) considered here is a kind of generalized two state quantum
walk on a line. The Hilbert space of the system is given by the tensor product
$ \mathcal{H} = \mathcal{H}_P \otimes \mathcal{H}_C$
of the position space
$$ \mathcal{H}_P = \textrm{Span}\{\ket{m}, m \in \mathbb{Z}\}$$
and the coin space $\mathcal{H}_C$.
In each step, the walker has three choices - it can move to the left,
move to the right or just stay at the current position.
To each of these options, we assign a vector of the standard basis of the coin space $\mathcal{H}_C$, i.e. the coin space is three dimensional
$$\mathcal{H}_C = \mathbb{C}^3 = \textrm{Span}\{\ket{L}, \ket{S}, \ket{R}\}.   $$
The evolution operator realizing a single step of the three state quantum walk is given by
$  U = S \cdot (I_P \otimes C)$
where $S$ is the position shift operator, $I_P$ is the identity operator of the position space
$\mathcal{H}_P$ and $C$ is the coin flip operator.
In the three state quantum walk on a line, the position shift operator $S$ has the form of
$$
S = \sum_{m = -\infty}^{+\infty}
                \ket{m-1}\bra{m} \otimes \ket{L}\bra{L} +
                \ket{m}\bra{m} \otimes \ket{S}\bra{S} +
                \ket{m+1}\bra{m} \otimes \ket{R}\bra{R}.
$$
As for the coin operator $C$, a common choice is the Grover operator $G$.
The Grover operator is originally designed for
Grover's search algorithm~\cite{grover1997quantum},
and now finds its use in quantum walks. The Grover operator is defined as
\begin{equation}
    \operator{G} = \frac{1}{3}
        \begin{pmatrix}
            -1  &   2   &   2   \\
            2   &   -1  &   2   \\
            2   &   2   &   -1
        \end{pmatrix}.
\end{equation}

The state of the walker after evolving $t$ steps is given by the successive applications of
the evolution operator $U$ on the initial state.
Let $\ket{\psi(t)}$ be the system state after walking $t$ steps, then
\begin{equation}
    \ket{\psi(t)} = \sum_{m} \ket{m} \otimes \big( \psi_L(t,m)\ket{L}  +
                    \psi_S(t,m)\ket{S} + \psi_R(t,m)\ket{R} \big)
                    = U^t\ket{\psi(0)},
\end{equation}
where $\ket{\psi(0)}$ is the initial state, $\psi_L(t,m)$ is the
probability amplitude of the walker being at position $m$ with coin state $\ket{L}$
after walking $t$ steps. $\psi_S(t,m)$ and $\psi_R(t,m)$ are defined similarly.
We will write $\ket{m, L}$, $\ket{m, S}$ and $\ket{m, R}$ for short of
$\ket{m}\otimes\ket{L}$, $\ket{m}\otimes\ket{S}$ and $\ket{m}\otimes\ket{R}$
whenever there is no ambiguity.
Let $P(t,m)$ be the probability of finding the walker at position $m$ after
walking $t$ steps, then
$$
P(t,m)  = \vert \psi_L(t,m) \vert^2 + \vert \psi_S(t,m) \vert^2 + \vert \psi_R(t,m) \vert^2.
$$

In summary, the process of Grover walks on a line can be described as follows.
\begin{description}
  \item[Step1.] Initialize the system state to
      $\ket{\psi_{0}} = \alpha\ket{0, L} + \beta\ket{0, S} + \gamma\ket{0, R} $,
      where $\alpha, \beta, \gamma \in \mathbb{C}$ and
      $\vert\alpha\vert^2 + \vert\beta\vert^2 + \vert\gamma\vert^2 = 1$.
  \item[Step2.] For any chosen number of steps $t$, apply $U$ to the system $t$ times.
  \item[Step3.] Measure the system state $\ket{\psi(t)}$ to get the walker's position
                probability distribution.
\end{description}

\subsection{Grover walks with one boundary}
\label{sec:one-boundary}
In this process, we introduce an absorbing boundary into the line,
resulting in Grover walks on a semi-infinite line.
This can be done by setting a measurement device
which corresponds to answering the question \textit{"Is the walker at position $n$?"}.
The measurement is implemented as two projection operators
$$
\Pi_{yes}^{n} = \ket{n}\bra{n} \otimes I_C, \quad  \Pi_{no}^{n} = I - \Pi_{yes}^{n},
$$
where $I_C$ is the identity operator of the coin space $\mathcal{H}_C$
and $I$ is the identity operator of the Hilbert space $\mathcal{H}$.

As an example of the measurement, suppose the system is now in state
$
\ket{\psi(t)} = \frac{1}{\sqrt{5}}\ket{0, L} + \frac{1}{\sqrt{5}}\ket{0, S} +
                \frac{1}{\sqrt{5}}\ket{0, R} +  \frac{1}{\sqrt{5}}\ket{1, S} +
                \frac{1}{\sqrt{5}}\ket{2, R},
$
and is measured by the projection operator $\Pi_{yes}^{0}$
(corresponding to the question \textit{"Is the walker at position $0$?"}).
The answer \textit{yes} is obtained with probability
\begin{eqnarray*}
  \| \Pi_{yes}^{0}\ket{\psi(t)} \|^2 &=&
        \|  \Pi_{yes}^{0}
            \big(
                        \frac{1}{\sqrt{5}}\ket{0, L} + \frac{1}{\sqrt{5}}\ket{0, S} +
                        \frac{1}{\sqrt{5}}\ket{0, R} + \frac{1}{\sqrt{5}}\ket{1, S} +
                        \frac{1}{\sqrt{5}}\ket{2, R}
            \big) \|^2      \\
  ~ &=& \|  \frac{1}{\sqrt{5}}\ket{0, L} + \frac{1}{\sqrt{5}}\ket{0, S} +
            \frac{1}{\sqrt{5}}\ket{0, R} \|^2 = \frac{3}{5},
\end{eqnarray*}
in which case the system state collapses to
$
\ket{\psi(t)}_{yes} =    \frac{1}{\sqrt{3}}\ket{0, L} +
                            \frac{1}{\sqrt{3}}\ket{0, S} + \frac{1}{\sqrt{3}}\ket{0, R}.
$
The answer is \textit{no} with probability $2/5$,
in which case the system collapses to state
$
\ket{\psi(t)}_{no} = \frac{1}{\sqrt{2}}\ket{1, S} + \frac{1}{\sqrt{2}}\ket{2, R}.
$

Analogous to Grover walks on a line,
Grover walks on a line with one boundary can be depicted as follows:
\begin{description}
  \item[Step1.] Initialize the system state to
      $\ket{\psi_{0}} = \alpha\ket{0, L} + \beta\ket{0, S} + \gamma\ket{0, R} $,
      where $\alpha, \beta, \gamma \in \mathbb{C}$ and
      $\vert\alpha\vert^2 + \vert\beta\vert^2 + \vert\gamma\vert^2 = 1$.
  \item[Step2.] For each step of the evolution
    \begin{enumerate}
      \item Apply $U$ to the system.
      \item Measure the system according to $\{ \Pi_{yes}^{-M}, \Pi_{no}^{-M} \}$
            to test whether the walker is or not at $-M \; (M > 0)$.
    \end{enumerate}
\item[Step3.] If the measurement result is \textit{yes} (i.e. the walker is at $-M$),
            then terminate the process, otherwise repeat \textbf{Step2}.
\end{description}

We are interested the probability that the measurement of whether the walker is at position $-M$
eventually results in yes, which is called the absorbing probability.
Let $P_{\underline{-M,0,\infty}}(\alpha,\beta,\gamma)$ denotes the absorbing probability,
where $-M$, $0$, and $\infty$ represent the left boundary,
the walker's initial position and the right boundary respectively,
and $\alpha,\beta,\gamma$ are the probability amplitudes of
the coin components $\ket{L}, \ket{S}, \ket{R}$.
To keep accordance with the two boundaries case in symbols,
we assume there is a right boundary which is infinitely far away in the one boundary case.
That's why we have a rather confusing $\infty$ here.

\subsection{Grover walks with two boundaries}
\label{sec:two-boundaries}
The third process is similar to Grover walks with one boundary,
except that two boundaries are presented rather than one.
Specifically, using the same measurement devices as defined for semi-infinite Grover walks,
we describe the following process which is called Grover walks on a line with two boundaries,
or the finite Grover walks.
\begin{description}
  \item[Step1.] Initialize the system state to
      $\ket{\psi_{0}} = \alpha\ket{0, L} + \beta\ket{0, S} + \gamma\ket{0, R} $,
      where $\alpha, \beta, \gamma \in \mathbb{C}$ and
      $\vert\alpha\vert^2 + \vert\beta\vert^2 + \vert\gamma\vert^2 = 1$.
  \item[Step2.] For each step of the evolution
  \begin{enumerate}
  \item Apply $U$ to the system.
  \item Measure the system according to $\{ \Pi_{yes}^{-M}, \Pi_{no}^{-M} \}$
                    to test whether the walker is or not at $-M  \; (M > 0)$.
        $-M$ is the left absorbing boundary.
  \item Measure the system according to $\{ \Pi_{yes}^{N}, \Pi_{no}^{N} \}$
                    to test whether the walker is or not at $N  \; (N > 0)$.
        $N$ is the right absorbing boundary.
  \end{enumerate}
  \item[Step3.] If either of the measurement results is \textit{yes}
                (i.e. the walker is either at $-M$ or $N$), then terminate the process,
                otherwise repeat \textbf{Step2}.
\end{description}

We are interested in the absorbing probabilities that the walker is eventually
absorbed by the left or the right boundary. Let
\begin{itemize}
    \item $P_{\underline{-M,0,N}}(\alpha,\beta,\gamma)$
                be the left absorbing probability that the measurement of
                whether the walker is at position $-M$ eventually results in yes.
    \item $Q_{\underline{-M,0,N}}(\alpha,\beta,\gamma)$
                be the right absorbing probability that the measurement of
                whether the walker is at position $N$ eventually results in yes.
\end{itemize}
In the above conventions, $-M$, $0$, and $\infty$ represent the left boundary,
the walker's initial position and the right boundary respectively,
and $\alpha,\beta,\gamma$ are the probability amplitudes of
the coin components $\ket{L}, \ket{S}, \ket{R}$.
\section{One Boundary}\label{sec:one-boundary}

We begin with three special initial cases:
1) the initial state is $\ket{0, L}$;
2) the initial state is $\ket{0, S}$;
and
3) the initial state is $\ket{0, R}$.
The boundary is fixed at $-1$ for above three cases.
We will define generating functions for these simple cases
which are used to determine absorbing probabilities for all boundaries.
The methods applied in this section are inspired by~\cite{bach2004one}.
we thank them for offering such elegant methods.

\subsection{Generating functions}
\label{sec:one-boundary-generating-function}

We first consider simple cases where the boundary is at $-1$,
and the initial state is $\ket{0, L}$, $\ket{0, S}$ and $\ket{0, R}$ respectively.
We define the following generating functions $l(z)$, $s(z)$ and $r(z)$ for each
of above three cases
\begin{eqnarray}
  l(z)  &=&     \sum_{t=1}^{\infty} \bra{-1,L}U(\Pi_{no}^{-1}U)^{t-1}\ket{0,L} z^t,
  \label{eq:l-generating-function}   \\
  s(z)  &=&     \sum_{t=1}^{\infty} \bra{-1,L}U(\Pi_{no}^{-1}U)^{t-1}\ket{0,S} z^t,
  \label{eq:s-generating-function}   \\
  r(z)  &=&     \sum_{t=1}^{\infty} \bra{-1,L}U(\Pi_{no}^{-1}U)^{t-1}\ket{0,R} z^t.
  \label{eq:r-generating-function}
\end{eqnarray}
$\bra{-1,L}U(\prod_{no}^{-1}U)^{t-1}\ket{0,L}$ is the probability amplitude
with which the walker first reaches the boundary $-1$ after walking $t$ steps
when starting with state $\ket{0, L}$.
It's easy to see that we encode all the probability amplitudes
that lead the walker to $-1$ into the coefficients of $z^t$ in $l(z)$.
$s(z)$ and $r(z)$ are similarly defined except that the system is initialized to
$\ket{0,S}$ and $\ket{0,R}$ respectively.

Recall that we denote
$P_{\underline{-1,0,\infty}}(1,0,0)$,
$P_{\underline{-1,0,\infty}}(0,1,0)$ and
$P_{\underline{-1,0,\infty}}(0,0,1)$
as the probabilities that a walker starting in state $\ket{0, L}, \ket{0, S}$ or $\ket{0, R}$
is eventually absorbed by the boundary $-1$.
These probabilities can be calculated by summing up the squared amplitudes
encoded in the generating functions:
\begin{eqnarray*}
  P_{\underline{-1,0,\infty}}(1,0,0)    &=&     \sum_{t=1}^{\infty} \Big\| [z^t]l(z) \Big\|^2, \\
  P_{\underline{-1,0,\infty}}(0,1,0)    &=&     \sum_{t=1}^{\infty} \Big\| [z^t]s(z) \Big\|^2, \\
  P_{\underline{-1,0,\infty}}(0,0,1)    &=&     \sum_{t=1}^{\infty} \Big\| [z^t]r(z) \Big\|^2,
\end{eqnarray*}
where $[z^t]l(z)$ is the coefficient of $z^t$ in $l(z)$, and similarly for $[z^t]s(z)$, $[z^t]r(z)$.

Given two arbitrary generating functions $u$ and $v$,
their Hadamard product is $u \odot v$, defined as
$$ (u \odot v)(z) = \sum_{t = 1}^{\infty} \big[([z^t]u(z))([z^t]v(z))\big]z^t.$$
Thus,
$P_{\underline{-1,0,\infty}}(1,0,0) = ( l \odot \bar{l})(1)$,
$P_{\underline{-1,0,\infty}}(0,1,0) = ( s \odot \bar{s})(1)$ and
$P_{\underline{-1,0,\infty}}(0,0,1) = ( r \odot \bar{r})(1)$.
In general we have
\begin{equation}\label{eq:hadamard-product}
  (u \odot v)(1) = \frac{1}{2\pi}\int_{0}^{2\pi} u(e^{i\theta})v(e^{-i\theta})d\theta,
\end{equation}
provided that $\sum_{t=1}^{\infty} ([z^t]u(z))([z^t]v(z))$ converges.
Let $L(\theta) = l(e^{i\theta}), S(\theta) = s(e^{i\theta}), R(\theta) = s(e^{i\theta})$,
we can calculate the absorbing probabilities in analytical form using
Equation \ref{eq:hadamard-product}:
\begin{eqnarray}
  \label{eq:l-probability}
  P_{\underline{-1,0,\infty}}(1,0,0)    &=&
        \frac{1}{2\pi}\int_{0}^{2\pi} \vert L(\theta) \vert^2 d\theta \\
  \label{eq:s-probability}
  P_{\underline{-1,0,\infty}}(0,1,0)    &=&
        \frac{1}{2\pi}\int_{0}^{2\pi} \vert S(\theta) \vert^2 d\theta \\
  \label{eq:r-probability}
  P_{\underline{-1,0,\infty}}(0,0,1)    &=&
        \frac{1}{2\pi}\int_{0}^{2\pi} \vert R(\theta) \vert^2 d\theta.
\end{eqnarray}

The generating functions defined by Equations
\ref{eq:l-generating-function}-\ref{eq:r-generating-function} can be solved.
The solving procedure is detailed in Appendix \ref{appendix:derive-recurrences}.
We present the results in Theorem \ref{thm:one-boundary}.
\begin{theorem}\label{thm:one-boundary}
  The generating functions $l(z)$, $s(z)$ and $r(z)$
  defined in Grover walks with one boundary satisfy the following recurrences
  \begin{eqnarray}
    l(z)  &=& \frac{-3 - 4z - 3z^2 + (1+z)\Delta}{2z},    \\
    s(z)  &=& \frac{-3 -z          + \Delta}{2z},   \\
    r(z)  &=& \frac{ 3 + 2z + 3z^2 + (-1+z)\Delta}{4z},
  \end{eqnarray}
  where $ \Delta = \sqrt{9 + 6z + 9z^2}$.
\end{theorem}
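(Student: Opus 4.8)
The plan is to reduce the operator definitions of $l,s,r$ to a small closed algebraic system by a first-step analysis, and then solve that system; the closed forms then follow.

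\emph{First-step analysis.} Using $G\ket{L}=\tfrac{1}{3}(-\ket{L}+2\ket{S}+2\ket{R})$, $G\ket{S}=\tfrac{1}{3}(2\ket{L}-\ket{S}+2\ket{R})$, $G\ket{R}=\tfrac{1}{3}(2\ket{L}+2\ket{S}-\ket{R})$ together with the shift, I would first compute the three vectors $\Pi_{no}^{-1}U\ket{0,L}$, $\Pi_{no}^{-1}U\ket{0,S}$, $\Pi_{no}^{-1}U\ket{0,R}$. Each turns out to be supported only on $\ket{0,S}$ and $\ket{1,R}$, while the amplitude that was sent to $\ket{-1,L}$ (and then deleted by $\Pi_{no}^{-1}$) is $-\tfrac13$, $\tfrac23$, $\tfrac23$ respectively; these are precisely the $t=1$ contributions to $l$, $s$, $r$.

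\emph{Product decomposition and the recurrence system.} The structural fact that drives the method is that in this model the walker's position decreases only through the $\ket{L}$ register, and any step that lowers the position lands in a state of the form $\ket{\cdot,L}$. Hence a first-passage path from $\ket{1,R}$ to $\ket{-1,L}$ must first visit position $0$, and does so in the state $\ket{0,L}$; splitting the path at that first visit and using translation invariance of $U$ and of the ``avoid $-1$'' constraint identifies the generating function of the first piece with $r(z)$ and that of the second with $l(z)$, so that the amplitude generating function for $\ket{1,R}\to\ket{-1,L}$ equals $r(z)l(z)$. Together with the first-step analysis this gives
\begin{align*}
 l(z) &= -\tfrac{1}{3}z + \tfrac{2}{3}z\,s(z) + \tfrac{2}{3}z\,r(z)l(z),\\
 s(z) &= \tfrac{2}{3}z - \tfrac{1}{3}z\,s(z) + \tfrac{2}{3}z\,r(z)l(z),\\
 r(z) &= \tfrac{2}{3}z + \tfrac{2}{3}z\,s(z) - \tfrac{1}{3}z\,r(z)l(z).
\end{align*}

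\emph{Solving the system.} Set $p(z)=r(z)l(z)$. The middle equation is linear in $s$ and gives $s=\dfrac{2z(1+p)}{3+z}$. Substituting this into the other two equations and then forming the product $r\cdot l=p$ yields, after clearing denominators and dropping the extraneous factor $1-z^{2}$, the quadratic $2z^{2}p^{2}+(9+6z+5z^{2})p+2z^{2}=0$. Its discriminant is $(9+6z+z^{2})(9+6z+9z^{2})=(3+z)^{2}\Delta^{2}$ with $\Delta=\sqrt{9+6z+9z^{2}}$, so $p=\dfrac{(3+z)\Delta-(9+6z+5z^{2})}{4z^{2}}$, the branch being fixed by the requirement that $l,s,r$ be power series vanishing at $z=0$ (so that $p=O(z^{2})$). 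Substituting $p$ back into $s$, $l=\tfrac{z}{3}(-1+2s+2p)$ and $r=\tfrac{z}{3}(2+2s-p)$ gives the stated closed forms.

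\emph{Where the difficulty lies.} The genuinely delicate step is the product decomposition: one must argue carefully that the coin is forced into $\ket{L}$ at the first visit to each new leftward site (so the splitting point is unambiguous and both pieces are true first-passage paths), and that translation invariance of $U$ and of the absorbing constraint lets one identify the ``right'' piece with $r(z)$ itself. After that, everything is routine computation, the only nontrivial choice being the correct square-root branch, which the power-series normalization resolves.
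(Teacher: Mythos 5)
Your proposal is correct and follows essentially the same route as the paper's Appendix A: a first-step expansion plus the first-passage product decomposition (identifying the $\ket{1,R}\to\ket{-1,L}$ generating function with $r(z)l(z)$) yields exactly the same three-equation system, which is then solved with the branch fixed by requiring a power-series expansion at $z=0$. Your write-up is actually more explicit than the paper's at the final step, since the paper merely states ``solving these equations and discarding the solutions that don't have Taylor expansions,'' whereas you carry out the elimination via $p=r(z)l(z)$ and the resulting quadratic $2z^{2}p^{2}+(9+6z+5z^{2})p+2z^{2}=0$.
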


Then by Equations \ref{eq:l-probability}-\ref{eq:r-probability},
we can calculate the absorbing probabilities for these three special cases:
\begin{eqnarray*}
  P_{\underline{-1,0,\infty}}(1,0,0)
            &=&
            \frac{1}{2\pi}\int_{0}^{2\pi} \vert L(\theta) \vert^2 d\theta
            \;\;=\;\;
            \frac{1}{2\pi}\int_{0}^{2\pi} \vert l(e^{i\theta}) \vert^2 d\theta
            \;\;\doteq\;\;
            0.4248,  \\
  P_{\underline{-1,0,\infty}}(0,1,0)
            &=&
            \frac{1}{2\pi}\int_{0}^{2\pi} \vert S(\theta) \vert^2 d\theta
            \;\;=\;\;
            \frac{1}{2\pi}\int_{0}^{2\pi} \vert s(e^{i\theta}) \vert^2 d\theta
            \;\;\doteq\;\;
            0.5255,  \\
  P_{\underline{-1,0,\infty}}(0,0,1)
            &=&
            \frac{1}{2\pi}\int_{0}^{2\pi} \vert R(\theta) \vert^2 d\theta
            \;\;=\;\;
            \frac{1}{2\pi}\int_{0}^{2\pi} \vert r(e^{i\theta}) \vert^2 d\theta
            \;\;\doteq\;\;
            0.6693.
\end{eqnarray*}

The analytical absorbing probabilities are matched with the simulation results,
as shown in Figure \ref{fig:one-boundary-exit-probabilities}.
We can see that the absorbing probabilities converge toward their
limiting values very quickly.
The fast convergence indicates that the remaining
probability amplitudes spread rapidly to the right and never go back.
\begin{figure}[!ht]
  \begin{minipage}[t]{0.48\textwidth}
    \centering
    \includegraphics[width=1.0\textwidth]{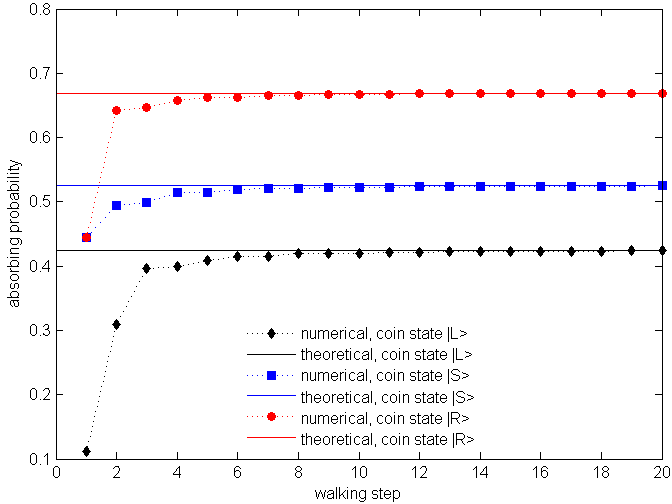}
    \caption{(Color online): Numerical simulations of the
              absorbing probabilities evolving with walking steps
              in Grover walks with one boundary.
              The boundary is fixed at $-1$.
              Solid lines represent the theoretical probabilities.
              Dashed lines represent the numerical probabilities.
            }
    \label{fig:one-boundary-exit-probabilities}
  \end{minipage}
  \hspace{0.02\textwidth}
  \begin{minipage}[t]{0.48\textwidth}
    \centering
    \includegraphics[width=1.0\textwidth]{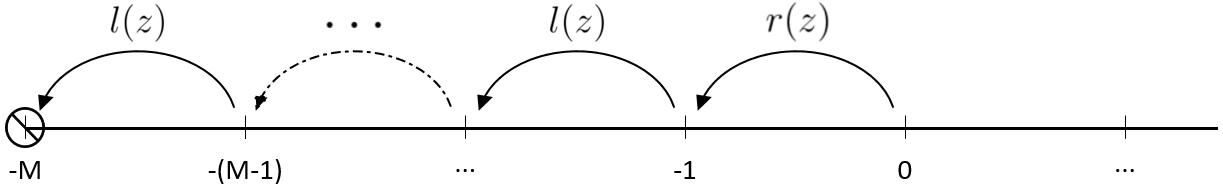}
    \caption{ The walker is initialized to state $\ket{0,R}$ and the boundary is at $-M$.
              In order to reach the boundary,
              the walker has to move left $M$ times effectively.
              For each move after the first, the coin state is $\ket{L}$.
              The first move is with coin state $\ket{R}$.}
    \label{fig:one-boundary-m}
  \end{minipage}
\end{figure}

\subsection{Arbitrary boundary}
\label{sec:one-boundary-arbitrary-boundary}
The reason that the generating functions $l(z)$, $s(z)$ and $r(z)$ are important
in analyzing absorbing probability for arbitrary boundary is as follows.
Suppose now that the boundary is located at $-M$ for some $M \geq 1$
and the walker's initial state is $\ket{0, R}$.
Consider a generating function defined similarly to $r(z)$,
except for the boundary at position $-M$ rather than $1$.
Then this generating function is simply $r(z)(l(z))^{M-1}$,
which follows from the fact that
to reach the boundary $-M$ from original position $0$,
the walker has to move left $M$ times effectively.
For each move after the first, the coin state is always $\ket{L}$.
The first move is with coin state $\ket{R}$.
The process is depicted in Figure \ref{fig:one-boundary-m}.
Likewise, the generating functions corresponding to staring in
states $\ket{0,L}$ and $\ket{0,S}$ are simply $(l(z))^M$ and $s(z)(l(z))^{M-1}$.
Then from the discussions in the previous section we
know how to calculate the absorbing probabilities for these simple cases:
\begin{eqnarray*}
  P_{\underline{-M,0,\infty}}(1,0,0)    &=&
                    \frac{1}{2\pi}\int_{0}^{2\pi}
                    \vert L(\theta) \vert^{2M} d\theta,     \\
  P_{\underline{-M,0,\infty}}(0,1,0)    &=&
                    \frac{1}{2\pi}\int_{0}^{2\pi}
                    \vert S(\theta) \vert^{2} \vert L(\theta) \vert^{2M-2} d\theta,    \\
  P_{\underline{-M,0,\infty}}(0,0,1)    &=&
                    \frac{1}{2\pi}\int_{0}^{2\pi}
                    \vert R(\theta) \vert^{2} \vert L(\theta) \vert^{2M-2} d\theta.
\end{eqnarray*}

Figure \ref{fig:one-moving-boundary} illustrates the relation between
the absorbing probabilities (on $y$-axis) and the boundary $M$ (on $x$-axis).
We can see that the absorbing probability $P_{\underline{-M,0,\infty}}(0,0,1)$
undergoes an extreme fast decay when the boundary is moved from $-1$ to $-2$,
and then rapidly reaches its limiting value as the boundary becomes large.
This rather strange phenomenon is emerged due to the
\textit{oscillating localization} effect in Grover walks.
It has been proved that
Grover walks on an infinite line will result in localization
when the system is initialized to state $\ket{0,R}$.
The localization probability is $0.202$ at the origin (position $0$)~\cite{falkner2014weak},
and is exponentially decaying with the distance from the origin~\cite{vstefavnak2014limit}.
However, localization is also shown at position $-1$ in this special case,
which violates the exponentially decaying conclusion stated in~\cite{vstefavnak2014limit},
and has not been studied to the best of my knowledge.
The localization probability at $-1$ is approximate to $0.202$ by simulation,
as shown in Figure \ref{fig:figuer04}.
From Figure \ref{fig:figuer04} we can infer that,
the probabilities of finding the walker at position $-1$ and $0$
are sum to constant $0.404$, while the two probabilities are oscillating around $0.202$.
These two probabilities will finally both be $0.202$
when the walker evolves large enough steps.
By analyzing the numerical data,
we also find that the probabilities at the left of $-1$ decay exponentially
with the distance from $-1$,
while the probabilities at the right of $0$ decay exponentially
with the distance from $0$.
This is a new \textit{two-peak localization} phenomenon.
We conjecture that it is the oscillating effect that results in the two-peak localization -
a fair part of the system state is oscillating between position $-1$ and $0$,
and never leave that region.
When a boundary is located at $-1$,
the amplitudes that should have been oscillating between $-1$ and $0$
are absorbed by that boundary, resulting in the disappearance of localization.
When a boundary is located at $-2$,
some of the amplitudes are oscillating between position $-1$ and $0$,
resulting in the amplitudes absorbed by the boundary $-2$ are much less than
in the former case.
The two-peak oscillation now revives at $-1$ and $0$.
From this point of view, we figure out the reason for the extreme fast decay
of $P_{\underline{-M,0,\infty}}(0,0,1)$.
\begin{figure}[!ht]
  \begin{minipage}[t]{0.48\textwidth}
    \centering
    \includegraphics[width=1.0\textwidth]{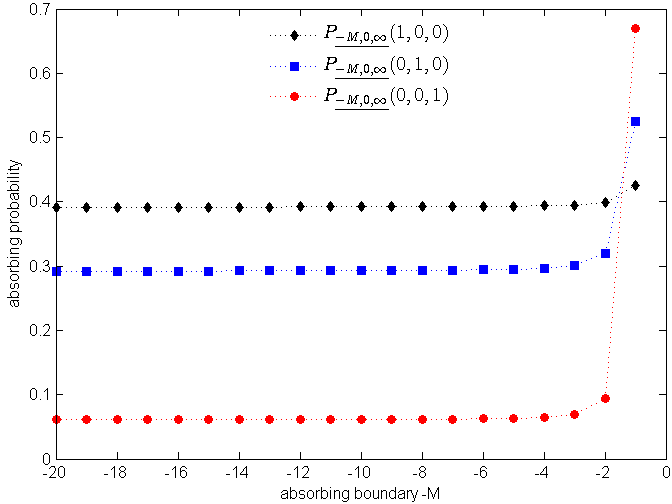}
    \caption{(Color online) The theoretical absorbing probabilities
      $P_{\underline{-M,0,\infty}}(1,0,0)$,
      $P_{\underline{-M,0,\infty}}(0,1,0)$ and
      $P_{\underline{-M,0,\infty}}(0,0,1)$
      evolving with the absorbing boundary in Grover walks with one boundary.}
    \label{fig:one-moving-boundary}
  \end{minipage}
  \hspace{0.02\textwidth}
  \begin{minipage}[t]{0.48\textwidth}
    \centering
    \includegraphics[width=1.0\textwidth]{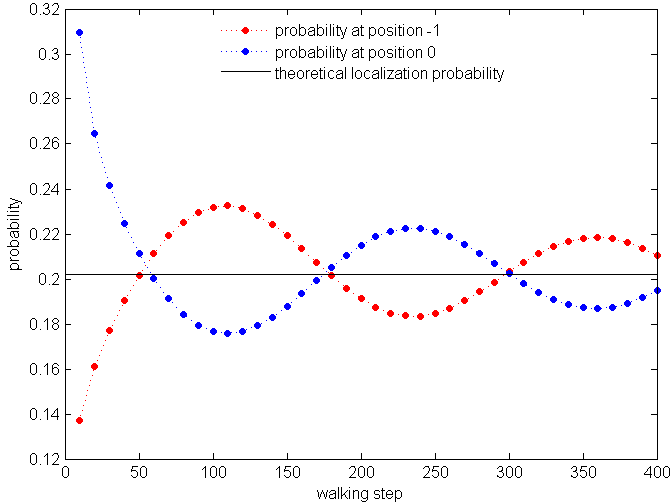}
    \caption{(Color online) The oscillating localization effect in Grover walks.
            The probabilities at position $-1$ and $0$ are oscillating around
            the theoretical localization probability $0.202$.
            The sum of these two probabilities is a constant $0.404$.}
    \label{fig:figuer04}
  \end{minipage}
\end{figure}

We have only considered the cases where the boundaries were located to
the left of the walker so far.
The symmetric cases (where the boundaries are to the right of the walker)
are easy to analyze as the Grover operator is permutation symmetric.
A Grover walk with some initial state $(\alpha,\beta,\gamma)$ and a boundary $M \geq 0$
is equivalent to
a Grover walk with some initial state $(\gamma,\beta,\alpha)$ and a boundary $-M$
in the sense that they have the symmetric probability distribution and
the same absorbing probability.
Let $P_{\underline{\infty,0,M}}(\alpha,\beta,\gamma)$ denotes the absorbing probability
where the boundary is positioned at $M \geq 0$.
Its relationship to $P_{\underline{-M,0,\infty}}(\alpha,\beta,\gamma)$
is stated in the following proposition.
\begin{proposition}
  In Grover walks with one boundary, the absorbing probability satisfies
  $$
    P_{\underline{\infty,0,M}}(\alpha,\beta,\gamma) =
    P_{\underline{-M,0,\infty}}(\gamma,\beta,\alpha),
  $$
  where $M \geq 0$ is a boundary,
  and $(\alpha,\beta,\gamma)$ are amplitudes corresponding to
  $\ket{L}$, $\ket{S}$, $\ket{R}$ coin components.
\end{proposition}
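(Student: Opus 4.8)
The plan is to exhibit an explicit bijection between the two walk processes that swaps the roles of left and right, and to check that this bijection intertwines all the structures that define the absorbing probability (the shift operator, the coin, the projectors). The key observation is that the Grover operator $G$ is invariant under the permutation of the coin basis that exchanges $\ket{L}$ and $\ket{R}$ while fixing $\ket{S}$: if $\Sigma$ denotes the unitary on $\mathcal{H}_C$ with $\Sigma\ket{L}=\ket{R}$, $\Sigma\ket{R}=\ket{L}$, $\Sigma\ket{S}=\ket{S}$, then $\Sigma G \Sigma^{-1}=G$, which is immediate from the matrix form of $G$ (its entries depend only on whether the row equals the column).

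First I would define the unitary $V$ on $\mathcal{H}=\mathcal{H}_P\otimes\mathcal{H}_C$ by $V\ket{m}\otimes\ket{c}=\ket{-m}\otimes\Sigma\ket{c}$, i.e. simultaneously reflect the position line through the origin and swap $\ket{L}\leftrightarrow\ket{R}$. Next I would verify the intertwining relation $V\,U\,V^{-1}=U$, where $U=S\cdot(I_P\otimes C)$ with $C=G$. This splits into two parts: $V(I_P\otimes G)V^{-1}=I_P\otimes G$, which follows from $\Sigma G\Sigma^{-1}=G$ and the fact that $V$ acts on the position factor alone by $\ket{m}\mapsto\ket{-m}$; and $V S V^{-1}=S$, which one checks term by term on the defining sum for $S$ — the term $\ket{m-1}\bra{m}\otimes\ket{L}\bra{L}$ maps under conjugation by $V$ to $\ket{-(m-1)}\bra{-m}\otimes\ket{R}\bra{R}$, and reindexing $m'=-m$ this is exactly the $\ket{m'+1}\bra{m'}\otimes\ket{R}\bra{R}$ term, with the $\ket{R}$ and $\ket{S}$ terms behaving symmetrically. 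Then I would note that $V$ conjugates the projector $\Pi^{n}_{yes}=\ket{n}\bra{n}\otimes I_C$ to $\Pi^{-n}_{yes}$ (since $\Sigma$ is unitary on the coin space it leaves $I_C$ fixed), and hence also $\Pi^{n}_{no}$ to $\Pi^{-n}_{no}$.

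With these facts in hand the proposition is essentially a change of variables in the series defining the absorbing probability. The absorbing probability for the process with boundary at $N=M$ and initial state $\ket{\psi_0}=\alpha\ket{0,L}+\beta\ket{0,S}+\gamma\ket{0,R}$ is $\sum_{t\ge1}\big\|\Pi^{M}_{yes}\,U(\Pi^{M}_{no}U)^{t-1}\ket{\psi_0}\big\|^2$. Inserting $V^{-1}V=I$ repeatedly and using $V^\dagger\Pi^{M}_{yes}V=\Pi^{-M}_{yes}$, $V^\dagger\Pi^{M}_{no}V=\Pi^{-M}_{no}$, $V^\dagger U V=U$ (equivalently $VUV^{-1}=U$, and $V$ unitary so $V^{-1}=V^\dagger$), together with the fact that $V$ is an isometry so it preserves norms, the $t$-th term becomes $\big\|\Pi^{-M}_{yes}\,U(\Pi^{-M}_{no}U)^{t-1}V^{-1}\ket{\psi_0}\big\|^2$; and $V^{-1}\ket{\psi_0}=\alpha\ket{0,R}+\beta\ket{0,S}+\gamma\ket{0,L}$ is precisely the initial state with amplitudes $(\gamma,\beta,\alpha)$ located at the origin. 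Summing over $t$ identifies this with $P_{\underline{-M,0,\infty}}(\gamma,\beta,\alpha)$, which is the claim. I expect the only mildly delicate step to be the bookkeeping in the termwise check $VSV^{-1}=S$ — one must be careful that reflecting positions turns a "move left with $\ket{L}$" event into a "move right with $\ket{R}$" event and that the reindexing of the infinite sum is legitimate; everything else is formal. The $M=0$ case is trivial since then the walker is absorbed immediately with probability $|\alpha|^2+|\beta|^2+|\gamma|^2=|\gamma|^2+|\beta|^2+|\alpha|^2=1$ on the left as well, so both sides equal $1$.
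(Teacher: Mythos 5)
Your proof is correct and formalizes exactly the symmetry argument the paper relies on: the paper merely asserts that the Grover coin is permutation symmetric and that reflecting the line therefore carries one process onto the other, while you make this precise via the intertwining unitary $V$ (reflection of position combined with the $\ket{L}\leftrightarrow\ket{R}$ swap $\Sigma$) and the relations $VUV^{-1}=U$, $V\Pi^{n}_{yes}V^{-1}=\Pi^{-n}_{yes}$. One minor caveat: your closing remark that the $M=0$ case is ``trivially $1$ by immediate absorption'' does not follow from the stated protocol (the evolution $U$ is applied \emph{before} each measurement, so the first measurement at position $0$ need not succeed), but this aside is harmless since your general conjugation argument already covers $M=0$ without it.
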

\section{Two Boundaries}\label{sec:two-boundaries}

We study Grover walks with two boundaries in the following way.
First, we fix the left boundary at $-1$,
and move the right boundary $N$ relatively to observe how the left absorbing probability
changes with $N$. To simplify the analysis, we consider three simple cases:
1) the initial state is $\ket{0, L}$;
2) the initial state is $\ket{0, S}$;
and
3) the initial state is $\ket{0, R}$.
The left boundary is fixed at $-1$ for above cases.
The left absorbing probabilities
$P_{\underline{-1,0,N}}(1,0,0)$,
$P_{\underline{-1,0,N}}(0,1,0)$, and
$P_{\underline{-1,0,N}}(0,0,1)$
are studied both numerically and analytically.
Recall that these notations are defined in Section \ref{sec:two-boundaries}.
Then, arbitrary left boundary $-M$ and arbitrary right boundary $N$ are investigated.
We make use of the generating functions defined in the former case to
express the absorbing probabilities studied in this one.
We will show that the sum of the left and right absorbing probabilities
is less than $1$ for almost arbitrary left and right boundaries $-M, N$,
which is strikingly different from that of Hadamard walks and random walks with boundaries.
This is due to the localization effect uniquely in Grover walks.
Some of the system state is trapped near the origin and cannot be absorbed by either boundary,
resulting in the sum less than $1$.

\subsection{Generating functions}

Three special initial cases are considered:
the initial state is $\ket{0, L}$, $\ket{0, S}$ and $\ket{0, R}$ respectively,
and the left boundary is always fixed at $-1$.
We define three functions $l(N, z), s(N, z), r(N, z)$ for these cases as
\begin{eqnarray}
  l(N, z)   &=&
            \sum_{t=1}^{\infty} \bra{-1,L}U(\Pi_{no}^{N}\Pi_{no}^{-1}U)^{t-1}\ket{0,L} z^t,
  \label{eq:l-generating-function-boundaries}     \\
  s(N, z)  &=&
            \sum_{t=1}^{\infty} \bra{-1,L}U(\Pi_{no}^{N}\Pi_{no}^{-1}U)^{t-1}\ket{0,S} z^t,
  \label{eq:s-generating-function-boundaries}     \\
  r(N, z)  &=&
            \sum_{t=1}^{\infty} \bra{-1,L}U(\Pi_{no}^{N}\Pi_{no}^{-1}U)^{t-1}\ket{0,R} z^t.
  \label{eq:r-generating-function-boundaries}
\end{eqnarray}
Some explanations on the left generating function $l(N, z)$:
\begin{itemize}
  \item The initial state is $\ket{0, L}$.
  \item The left boundary is fixed at $-1$.
  \item The right boundary is at $N \geq 1$.
  \item $\bra{-1,L}U(\prod_{no}^{N}\prod_{no}^{-1}U)^{t-1}\ket{0,L}$ is the (non-normalized)
              probability amplitude that the quantum walker first hits the
              left boundary $-1$ before hitting the right boundary $N$ after walking $t$ steps.
  \item We encode all probability amplitudes that will result in the left absorption
        into the coefficients of $z^t$ in $l(N, z)$.
\end{itemize}
Other two generating functions $s(N,z), r(N, z)$ have the same explanations as $l(N, z)$
except that the latter two have initial states $\ket{0, S}$, $\ket{0, R}$ respectively.

Let's define
$L(N,\theta) = l(N, e^{i\theta})$,
$S(N,\theta) = s(N, e^{i\theta})$ and
$R(N,\theta) = r(N, e^{i\theta})$.
Based on the reasoning techniques described in Section \ref{sec:one-boundary-generating-function},
we can calculate the absorbing probabilities for above three cases
by following equations
\begin{eqnarray}
  P_{\underline{-1,0,N}}(1,0,0)     &=&
            \sum_{t=1}^{\infty} \Big\| [z^t]l(N, z) \Big\|^2    \;\;=\;\;
            \frac{1}{2\pi}\int_{0}^{2\pi} \vert L(N,\theta) \vert^2 d\theta,
  \label{eq:l-probability-boundaries}        \\
  P_{\underline{-1,0,N}}(0,1,0)     &=&
            \sum_{t=1}^{\infty} \Big\| [z^t]s(N, z) \Big\|^2    \;\;=\;\;
            \frac{1}{2\pi}\int_{0}^{2\pi} \vert S(N,\theta) \vert^2 d\theta,
  \label{eq:s-probability-boundaries}        \\
  P_{\underline{-1,0,N}}(0,0,1)     &=&
            \sum_{t=1}^{\infty} \Big\| [z^t]r(N, z) \Big\|^2    \;\;=\;\;
            \frac{1}{2\pi}\int_{0}^{2\pi} \vert R(N,\theta) \vert^2 d\theta,
  \label{eq:r-probability-boundaries}
\end{eqnarray}
where $[z^t]l(N,z)$ is the coefficient of $z^t$ in $l(N,z)$,
and similarly for $[z^t]s(N,z)$, $[z^t]r(N,z)$.

In Grover walks with two boundaries, the generating functions defined in Equations
\ref{eq:l-generating-function-boundaries},
\ref{eq:s-generating-function-boundaries} and
\ref{eq:r-generating-function-boundaries} can be solved.
For details on the solving procedure,
the readers can refer to Appendix \ref{appendix:derive-recurrences-two-boundaries}.
We only summarize the results here.
Let $l(0, z) = s(0, z) = r(0, z) = 0$,
then for any $N \geq 1$, these generating functions satisfy the following recurrences
\begin{eqnarray*}
  l(N, z)   &=&     -\frac{1}{3} z + \frac{2}{3} z \cdot s(N, z) +
                    \frac{2}{3} z \cdot l(N, z) \cdot r(N-1, z),        \\
  s(N, z)  &=&      \frac{2}{3} z - \frac{1}{3} z \cdot s(N, z) +
                    \frac{2}{3} z \cdot l(N, z) \cdot r(N-1, z),        \\
  r(N, z)  &=&      \frac{2}{3} z + \frac{2}{3} z \cdot s(N, z) -
                    \frac{1}{3} z \cdot l(N, z) \cdot r(N-1, z).
\end{eqnarray*}
Assuming $r(N-1, z)$ to be constant,
we find the solutions to these equations, as stated in Theorem \ref{thm:two-boundaries}.
\begin{theorem}\label{thm:two-boundaries}
  The generating functions $l(N, z)$, $s(N, z)$ and $r(N, z)$
  defined in Grover walks with two boundaries satisfy the following recurrences
  \begin{eqnarray}
    l(N, z)   &=&     \frac{-z + z^2}
                          {3 + z - (2z + 2z^2) \cdot r(N-1,z)},
    \label{eq:l-generating-function-solution-boundaries}         \\
    s(N, z)  &=&      \frac{2z - 2 z^2 \cdot r(N-1,z)}
                          {3 + z - (2z + 2z^2) \cdot r(N-1,z)},
    \label{eq:s-generating-function-solution-boundaries}         \\
    r(N, z)  &=&      \frac{2z + 2 z^2 - (z^2 + 3z^3) \cdot r(N-1,z) }
                          {3 + z - (2z + 2z^2) \cdot r(N-1,z)},
    \label{eq:r-generating-function-solution-boundaries}
  \end{eqnarray}
  for arbitrary $N \geq 1$, with initial conditions $l(0, z) = s(0, z) = r(0, z) = 0$.
\end{theorem}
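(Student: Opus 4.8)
The plan is to start from the system of three coupled functional equations displayed immediately above the theorem --- the ones relating $l(N,z),s(N,z),r(N,z)$ to $r(N-1,z)$, whose derivation from the evolution operator $U$ and the non-absorption projectors $\Pi_{no}^{N},\Pi_{no}^{-1}$ is carried out in Appendix~\ref{appendix:derive-recurrences-two-boundaries} --- and to solve that system for $l(N,z),s(N,z),r(N,z)$ by induction on $N$. The base case $N=0$ is just the stipulated initial condition $l(0,z)=s(0,z)=r(0,z)=0$. For the inductive step I would fix $N\ge 1$ and treat $\rho := r(N-1,z)$ as a known quantity: by the inductive hypothesis it is a rational function with $\rho(0)=0$ (every $r(\cdot,z)$ has vanishing constant term, as is visible from the recurrence, since each term on its right-hand side carries a factor $z$), so each denominator encountered below has nonzero constant term and is invertible in $\mathbb{C}[[z]]$, equivalently for $|z|$ sufficiently small.

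With $\rho$ held fixed, the three equations are \emph{linear} in $l := l(N,z)$, $s := s(N,z)$ and $r := r(N,z)$, because the only products that appear are $l\cdot\rho$, which is linear in $l$. First I would use the second equation to express $s$ in terms of $l$, obtaining $s = (2z + 2z\rho\, l)/(3+z)$, and substitute this into the first equation. That leaves a single linear equation for $l$; clearing the factor $3+z$ and collecting the coefficient of $l$ gives
$$
  l\cdot\bigl[(3-2z\rho)(3+z) - 4z^{2}\rho\bigr] = -z(3+z) + 4z^{2}.
$$
Since $(3-2z\rho)(3+z)-4z^{2}\rho = 3\bigl(3+z-(2z+2z^{2})\rho\bigr)$ and $-z(3+z)+4z^{2} = 3(z^{2}-z)$, the overall factor $3$ cancels and one recovers Equation~\ref{eq:l-generating-function-solution-boundaries}. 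Then back-substituting this $l$ into $s = (2z + 2z\rho\, l)/(3+z)$ and cancelling the factor $3+z$ that reappears in the numerator after simplification yields Equation~\ref{eq:s-generating-function-solution-boundaries}, and finally inserting $l$ and $s$ into the third equation gives Equation~\ref{eq:r-generating-function-solution-boundaries}. In every case the resulting expression is a rational function with zero constant term, which closes the induction.

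The hard part will be nothing more than the polynomial bookkeeping: one must check that the numerators collapse exactly as claimed and, for $s$ and $r$, that the $\rho$-linear terms recombine so that $3+z$ divides the numerator cleanly, leaving the \emph{same} denominator $3+z-(2z+2z^{2})\,r(N-1,z)$ in all three formulas. I expect this algebraic simplification --- not any structural or analytic subtlety --- to be the only real obstacle; the sole analytic remark needed is that every division performed is legitimate in the ring of formal power series, which follows from $r(N-1,z)$ having zero constant term as noted above.
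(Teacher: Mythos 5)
Your proposal is correct and follows essentially the same route as the paper: the paper likewise takes the coupled system derived in Appendix~\ref{appendix:derive-recurrences-two-boundaries}, treats $r(N-1,z)$ as a known quantity, and solves the resulting linear system for $l(N,z)$, $s(N,z)$, $r(N,z)$. Your algebra checks out (the coefficient of $l$ collapses to $3\bigl(3+z-(2z+2z^{2})\rho\bigr)$ and the factor $3+z$ divides the numerator of $s$ via $D+\rho(z^{2}-z)=(3+z)(1-z\rho)$), and your added remark about invertibility of the denominators in $\mathbb{C}[[z]]$ is a harmless strengthening of what the paper leaves implicit.
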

By combining Equations
\ref{eq:l-probability-boundaries}-\ref{eq:r-generating-function-solution-boundaries},
we can calculate the left absorbing probabilities
$P_{\underline{-1,0,N}}(1,0,0)$,
$P_{\underline{-1,0,N}}(0,1,0)$ and
$P_{\underline{-1,0,N}}(0,0,1)$ for arbitrary right boundary $N \geq 0$.
As an example, we calculate these values for $10$ different right boundaries.
The results are shown in Figure \ref{fig:two-boundaries-exit-probabilities-2}.
We can see that the left absorbing probability reaches to
its limiting value rapidly when the right boundary is shifting.

We derive a closed recurrence for $P_{\underline{-1,0,N}}(0,0,1)$,
as stated in Theorem \ref{thm:recurrence}.
The theorem perfectly matches with the simulation data.
We prove its correctness in Appendix \ref{appendix:proof-of-theorem-4}.
Solving the recurrence, we get the left absorbing probability when the right boundary is
infinitely far to the right:
$
\lim_{N \rightarrow \infty}P_{\underline{-1,0,N}}(0,0,1) = 1/\sqrt{2}.
$ 
\begin{theorem}\label{thm:recurrence}
  The left absorbing probabilities $P_{\underline{-1,0,N}}(0,0,1)$ obey the following recurrence.
  \begin{eqnarray*}
    P_{\underline{-1,0,0}}(0,0,1)   &=& 0, \\
    P_{\underline{-1,0,N+1}}(0,0,1) &=&
              \frac{2 + 3P_{\underline{-1,0,N}}(0,0,1)}
                    {3 + 4P_{\underline{-1,0,N}}(0,0,1)}, \quad N \geq 1.
  \end{eqnarray*}
\end{theorem}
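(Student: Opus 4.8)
The plan is to derive the recurrence directly from the generating-function recurrences of Theorem~\ref{thm:two-boundaries} together with the integral formula \eqref{eq:r-probability-boundaries}. The base case $P_{\underline{-1,0,0}}(0,0,1)=0$ is immediate from the stated initial condition $r(0,z)=0$, which makes $R(0,\theta)\equiv 0$ and hence the integral vanishes. For the inductive step the key observation is that $P_{\underline{-1,0,N}}(0,0,1)=\frac{1}{2\pi}\int_0^{2\pi}\vert R(N,\theta)\vert^2\,d\theta$, so I want to express $\vert R(N,\theta)\vert^2$ in terms of quantities whose angular integrals I already control. From \eqref{eq:r-generating-function-solution-boundaries}, $r(N,z)$ is a M\"obius-type function of $r(N-1,z)$ with coefficients that are polynomials in $z$ of small degree, so on the unit circle $R(N,\theta)$ is an explicit rational expression in $R(N-1,\theta)$, $e^{i\theta}$ and $e^{-i\theta}$.

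The main technical device I expect to use is a unitarity / flow-conservation identity for the truncated walk: because $U$ is unitary and the projectors $\Pi_{no}^{-1}$, $\Pi_{no}^{N}$ are orthogonal projections, the coefficients $[z^t]r(N,z)$ are amplitudes of an orthonormal family of histories, which forces pointwise algebraic relations among $\vert L(N,\theta)\vert$, $\vert S(N,\theta)\vert$, $\vert R(N,\theta)\vert$ on the unit circle (this is exactly the mechanism that let Bach et al.\ turn such integrals into rational numbers). Concretely I would first show that $\vert l(N,e^{i\theta})\vert^2+\vert s(N,e^{i\theta})\vert^2+\vert r(N,e^{i\theta})\vert^2$ and similar bilinear combinations are $\theta$-independent, or reduce to the single parameter $R(N-1,\theta)$ after integrating; then plugging the Theorem~\ref{thm:two-boundaries} formulas into $\vert R(N,\theta)\vert^2$ and integrating term by term over $\theta$ should collapse all the cross terms $\int e^{ik\theta}\,d\theta$ ($k\neq 0$) and leave an expression in the three integrals $\int\vert R(N-1,\theta)\vert^2$, $\int R(N-1,\theta)$, $\int \mathrm{const}$. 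The first of these is $P_{\underline{-1,0,N}}(0,0,1)$ by definition; the others I would evaluate by contour integration, using that $r(N-1,z)$ is analytic in the closed unit disk (it is a power series with $[z^0]=0$ converging there, since it encodes sub-probability amplitudes), so $\frac{1}{2\pi}\int_0^{2\pi} r(N-1,e^{i\theta})\,d\theta = r(N-1,0)=0$ and higher moments are likewise read off from low-order Taylor coefficients of $r(N-1,z)$, which are fixed small rationals independent of $N$.

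Carrying this out, the numerator and denominator of $\vert R(N,\theta)\vert^2$ are degree-$\leq 6$ Laurent polynomials in $e^{i\theta}$ whose coefficients are affine in $r(N-1,e^{i\theta})$ and $\overline{r(N-1,e^{i\theta})}$; after the angular integral only the constant Fourier mode survives, and the claimed identity $P_{\underline{-1,0,N+1}}(0,0,1)=\frac{2+3P_{\underline{-1,0,N}}(0,0,1)}{3+4P_{\underline{-1,0,N}}(0,0,1)}$ should drop out after simplification. The hard part will be precisely this bookkeeping: one must verify that all the $\theta$-dependent cross terms integrate to zero — equivalently, that no genuinely new functional of $r(N-1,\cdot)$ beyond its zeroth moment and the integral $\int\vert r(N-1)\vert^2$ appears — and that no subtlety arises from the denominator $3+z-(2z+2z^2)r(N-1,z)$ vanishing on $\vert z\vert=1$ (I would argue it does not, since the walk is sub-unitary and $r(N,z)$ stays bounded on the circle, justifying interchange of sum and integral in \eqref{eq:r-probability-boundaries}). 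Once these analytic points are secured, the recurrence is a finite, if tedious, computation, and the stated fixed point $1/\sqrt2$ is the attracting solution of $x=(2+3x)/(3+4x)$.
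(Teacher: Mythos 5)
There is a genuine gap here, and it sits exactly where you defer to ``tedious bookkeeping.'' Your plan is to expand $\vert R(N+1,\theta)\vert^2$ as a rational expression in $R(N,\theta)$ and $e^{\pm i\theta}$ and then kill the cross terms by Fourier orthogonality, leaving only $\int\vert R(N,\theta)\vert^2\,d\theta$ and low moments of $r(N,\cdot)$. That mechanism does not work: the quantity you must integrate is a \emph{quotient} whose denominator $\vert 3+z-(2z+2z^2)r(N,z)\vert^2$ depends on $r(N,e^{i\theta})$, which has infinitely many Fourier modes, so there is no term-by-term integration and no reason for the $\theta$-dependent pieces to integrate to zero. (Your fallback, a unitarity identity forcing $\vert L\vert^2+\vert S\vert^2+\vert R\vert^2$ to be constant on the circle, is also not available here --- the total absorption probability is strictly less than $1$ because of localization, so no such pointwise conservation law holds.) The paper's proof needs two ideas that are absent from your proposal. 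First, a linearization identity (Proposition \ref{prop:relation}): writing $r_n(z):=r(N,z)$, one has
\begin{equation*}
\frac{1}{z}\,r_n(z)\,r_n\!\left(\frac{1}{z}\right)
=\frac{2}{3z^2-2z+3}\left[\,r_n(z)+r_n\!\left(\frac{1}{z}\right)\right],
\end{equation*}
proved via an explicit closed form $r_n=2z(z+1)R_n/(R_{n+1}+z^2(1+3z)R_n)$ coming from a transfer-matrix eigenvalue expansion. This replaces the intractable quadratic-over-quadratic integrand by a \emph{linear} expression in $r_n$ over the fixed polynomial $3z^2-2z+3$. Second, a residue computation: after establishing analyticity of $r_n$ in the unit disk and a degree count showing one of the two resulting contour integrals vanishes, Cauchy's formula collapses $p_n=\frac{1}{2\pi i}\int_{\vert z\vert=1}\frac{1}{z}r_n(z)r_n(1/z)\,dz$ to the value of $r_n$ at the single point $\omega=\tfrac13+\tfrac{2\sqrt2}{3}i$ (a root of $3z^2-2z+3$), giving $p_n=-\tfrac{i}{\sqrt2}\,r_n(\omega)$.

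This last identity is the real content of the theorem, and it shows your guiding assumption is false: $p_{n+1}$ is \emph{not} recovered as a functional of the integral $\int\vert R(N,\theta)\vert^2\,d\theta$ by averaging; rather, both $p_{n+1}$ and $p_n$ are proportional to the pointwise values $r_{n+1}(\omega)$ and $r_n(\omega)$, and the claimed M\"obius recurrence for the probabilities is just the generating-function recurrence evaluated at $z=\omega$ (together with the fact that $r_n(\omega)$ is purely imaginary). Without the linearization identity and the reduction to a point evaluation, the computation you outline cannot be completed, so the proposal as written is not a proof.
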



\begin{figure}[!ht]
  \begin{minipage}[t]{0.48\textwidth}
    \centering
    \includegraphics[width=1.0\textwidth]{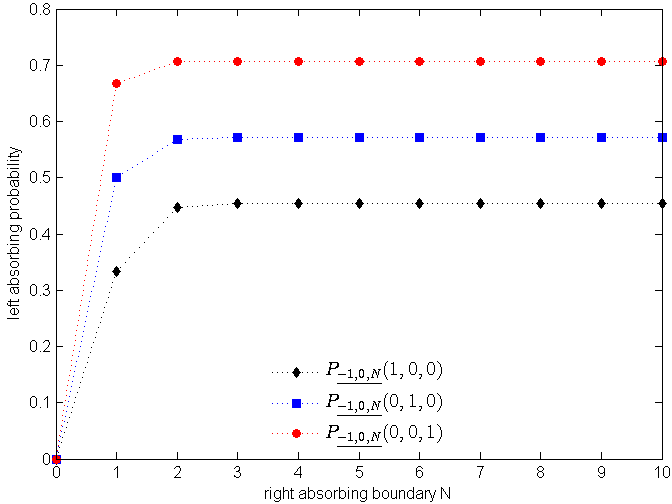}    \\
    \caption{(Color online) The theoretical left absorbing probabilities
              $P_{\underline{-1,0,N}}(1,0,0)$,
              $P_{\underline{-1,0,N}}(0,1,0)$ and
              $P_{\underline{-1,0,N}}(0,0,1)$
              evolving with the right absorbing boundary
              in Grover walks with two boundaries.
              The left absorbing boundary is at $-1$.}
    \label{fig:two-boundaries-exit-probabilities-2}
  \end{minipage}
  \hspace{0.02\textwidth}
  \begin{minipage}[t]{0.48\textwidth}
    \centering
    \includegraphics[width=1.0\textwidth]{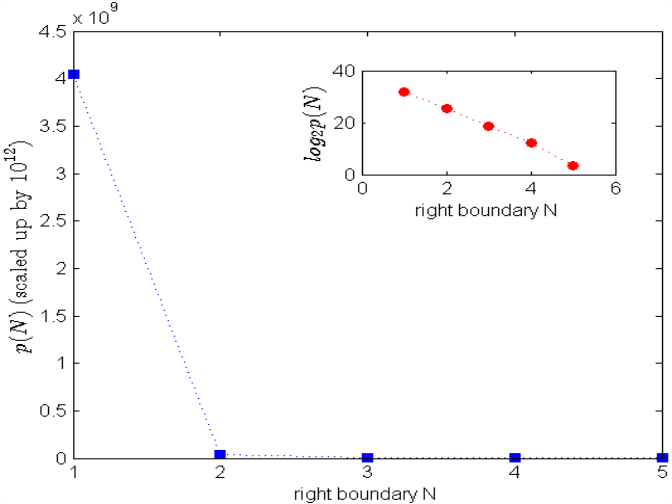}    \\
    \caption{ The localization probabilities $p(N)$
              in Grover walks with two boundaries.
              In order to make the values observable,
              we scale up $p(N)$ by $10^{12}$.
              The values of $\log_2(p(N))$ are plotted in the inset
              and decrease linearly with $N$.
              }
    \label{fig:figure06}
  \end{minipage}
\end{figure}

\subsection{Arbitrary boundaries}
Suppose now the left boundary is at $-M$ $(M \geq 1)$,
and the right boundary is at $N$ $(N \geq 1)$.
Let's define generating functions $l(-M, N, z)$, $s(-M, N, z)$ and $r(-M, N, z)$
similarly to $l(N, z)$, $s(N, z)$ and $r(N, z)$.
We now show that $r(-M, N, z)$ can be represented by $l(N, z)$ and $r(N, z)$.
For a walker with initial state $\ket{0,R}$,
if it wishes to be absorbed by the left boundary $-M$ rather than the right boundary $N$,
it must reach positions $-1, -2, \cdots, -M$ sequentially without being trapped by the right.
That is, the walker has to move left $M$ times effectively.
For the first move from $0$ to $-1$ with coin state $\ket{R}$,
$r(N, z)$ counts the paths.
For an intermediate move from $-k$ to $-(k+1)$ with coin state $\ket{L}$,
$l(N+k, z)$ counts the paths.
Then we have $r(-M, N, z) = r(N,z)\prod_{k=1}^{M-1} l(N+k,z)$.
Likewise, $l(-M, N, z) = l(N,z)\prod_{k=1}^{M-1} l(N+k,z)$
and $s(-M, N, z) = s(N,z)\prod_{k=1}^{M-1} l(N+k,z)$.
As $l(N,z)$, $s(N,z)$ and $r(N,z)$ can be calculated by the recurrences
stated in Theorem \ref{thm:two-boundaries},
$l(-M, N, z)$, $s(-M, N, z)$ and $r(-M, N, z)$ are solvable.
With $l(-M, N, z)$, $s(-M, N, z)$ and $r(-M, N, z)$ in hand,
we can calculate the left absorbing probabilities for
arbitrary left and right boundaries by the following formulas.
\begin{eqnarray}
  P_{\underline{-M,0,N}}(1,0,0)     &=&
    \sum_{t=1}^{\infty} \Big\| [z^t]l(-M, N, z) \Big\|^2    \;\;=\;\;
\frac{1}{2\pi}\int_{0}^{2\pi}\vert l(N,\theta)\prod_{k=1}^{M-1}l(N+k,\theta) \vert^2 d\theta,
    \label{eq:two-boundaries-left}     \\
  P_{\underline{-M,0,N}}(0,1,0)     &=&
    \sum_{t=1}^{\infty} \Big\| [z^t]s(-M, N, z) \Big\|^2    \;\;=\;\;
\frac{1}{2\pi}\int_{0}^{2\pi}\vert s(N,\theta)\prod_{k=1}^{M-1}l(N+k,\theta) \vert^2 d\theta,
    \label{eq:two-boundaries-stay}     \\
  P_{\underline{-M,0,N}}(0,0,1)     &=&
    \sum_{t=1}^{\infty} \Big\| [z^t]r(-M, N, z) \Big\|^2    \;\;=\;\;
\frac{1}{2\pi}\int_{0}^{2\pi}\vert r(N,\theta)\prod_{k=1}^{M-1} l(N+k,\theta) \vert^2 d\theta.
    \label{eq:two-boundaries-right}
\end{eqnarray}

Up to now, we have only discussed how to estimate the left absorbing probability $P_{\underline{-M,0,N}}(\alpha,\beta,\gamma)$.
How to calculate the right absorbing probability $Q_{\underline{-M,0,N}}(\alpha,\beta,\gamma)$?
As the Grover operator is permutation symmetric,
a Grover walk with some initial state $(\alpha,\beta,\gamma)$, a left boundary $-M$
and a right boundary $N$
is equivalent to
a Grover walk with some initial state $(\gamma,\beta,\alpha)$, a left boundary $-N$
and a right boundary $M$
in the sense that they have the symmetric probability distributions and
the symmetric left/right absorbing probabilities.
The relationship between the left and right absorbing probabilities is stated
in Proposition \ref{pro:two-boundaries}.
\begin{proposition}\label{pro:two-boundaries}
  In Grover walks with two boundaries, the left and right absorbing probabilities satisfy
  $$
  Q_{\underline{-M,0,N}}(\alpha,\beta,\gamma) =
  P_{\underline{-N,0,M}}(\gamma,\beta,\alpha),
  $$
  where $M \geq 1$, $N \geq 1$ are boundaries,
  and $(\alpha,\beta,\gamma)$ are amplitudes corresponding to
  $\ket{L}$, $\ket{S}$ $\ket{R}$ coin components.
\end{proposition}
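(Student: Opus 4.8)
The proof rests on the permutation symmetry of the Grover operator $\operator{G}$, already invoked informally in the paragraph preceding the statement. The plan is to turn that remark into a precise conjugation argument using a reflection operator, and then to track how it acts on the entire absorbing process rather than merely on the free unitary evolution.

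First I would introduce the reflection operator $\mathcal{R}$ on $\mathcal{H}=\mathcal{H}_P\otimes\mathcal{H}_C$ defined on basis states by $\mathcal{R}\ket{m}=\ket{-m}$ on the position space and $\mathcal{R}\ket{L}=\ket{R}$, $\mathcal{R}\ket{S}=\ket{S}$, $\mathcal{R}\ket{R}=\ket{L}$ on the coin space, extended linearly; it is unitary with $\mathcal{R}^2=I$. Then I would check the three elementary facts that drive everything: (i) $\mathcal{R}(I_P\otimes\operator{G})\mathcal{R}=I_P\otimes\operator{G}$, since $\operator{G}$ is invariant under the transposition $\ket{L}\leftrightarrow\ket{R}$ of its rows and columns; (ii) $\mathcal{R}S\mathcal{R}=S$, since reflecting positions exchanges left moves and right moves, exactly matching the coin swap (the $\ket{S}\bra{S}$ term is fixed); hence $\mathcal{R}U\mathcal{R}=U$; and (iii) $\mathcal{R}\,\Pi_{yes}^{n}\,\mathcal{R}=\Pi_{yes}^{-n}$ and therefore $\mathcal{R}\,\Pi_{no}^{n}\,\mathcal{R}=\Pi_{no}^{-n}$, directly from $\mathcal{R}\ket{n}=\ket{-n}$ and $\mathcal{R}I_C\mathcal{R}=I_C$. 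I would also note $\mathcal{R}\big(\alpha\ket{0,L}+\beta\ket{0,S}+\gamma\ket{0,R}\big)=\gamma\ket{0,L}+\beta\ket{0,S}+\alpha\ket{0,R}$, i.e.\ $\mathcal{R}$ sends the initial amplitudes $(\alpha,\beta,\gamma)$ to $(\gamma,\beta,\alpha)$.

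Next I would apply $\mathcal{R}$ to the two-boundary absorbing dynamics. In the process with left boundary $-M$ and right boundary $N$, surviving one full step sends a surviving state to $\Pi_{no}^{N}\Pi_{no}^{-M}U(\cdot)$ (the two commuting projectors may be written in either order since $-M\neq N$), and the vector amplitude of first being absorbed on the right at step $t$ is $\Pi_{yes}^{N}\Pi_{no}^{-M}U(\Pi_{no}^{N}\Pi_{no}^{-M}U)^{t-1}\ket{\psi_0}$, which equals $\Pi_{yes}^{N}U(\Pi_{no}^{N}\Pi_{no}^{-M}U)^{t-1}\ket{\psi_0}$ because $\Pi_{yes}^{N}\Pi_{no}^{-M}=\Pi_{yes}^{N}$ (the boundary sites are distinct). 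Conjugating by $\mathcal{R}$ and using facts (i)--(iii), this vector is carried to $\Pi_{yes}^{-N}U(\Pi_{no}^{M}\Pi_{no}^{-N}U)^{t-1}\ket{\psi_0'}$ with $\ket{\psi_0'}=\gamma\ket{0,L}+\beta\ket{0,S}+\alpha\ket{0,R}$, which is precisely the step-$t$ left-absorption amplitude of the process with left boundary $-N$, right boundary $M$, and initial amplitudes $(\gamma,\beta,\alpha)$. Since $\mathcal{R}$ is unitary it preserves the norm of each such amplitude, so summing $\|\cdot\|^2$ over $t$ yields $Q_{\underline{-M,0,N}}(\alpha,\beta,\gamma)=P_{\underline{-N,0,M}}(\gamma,\beta,\alpha)$, which is the claim; the one-boundary proposition follows from the same computation with the second boundary sent to $\pm\infty$.

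The only delicate point, and the one I would spell out most carefully, is the measurement bookkeeping: the process is a sequence of projective measurements, not a single unitary, so one must verify that the branch of the measurement tree labelled ``never absorbed, then absorbed on the right at time $t$'' is mapped bijectively by $\mathcal{R}$ onto the branch ``never absorbed, then absorbed on the left at time $t$'' of the reflected process, with \emph{equal} branch probabilities and not merely equal amplitudes up to phase. Because $\mathcal{R}$ conjugates $U$ and each projector individually and is unitary, each conditional step probability along a branch is preserved, so the bijection of branches and the equality of probabilities are immediate once the distinctness of boundary sites is used to dispose of the measurement-order subtlety noted above.
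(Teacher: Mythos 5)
Your proof is correct and follows the same idea the paper relies on: the permutation symmetry of the Grover coin under $\ket{L}\leftrightarrow\ket{R}$ combined with spatial reflection, which the paper only invokes informally in the paragraph preceding the proposition without writing out the conjugation argument. Your formalization via the reflection operator $\mathcal{R}$, including the checks that $\mathcal{R}U\mathcal{R}=U$, that the projectors map to their reflected counterparts, and that the branch amplitudes (hence branch probabilities) are preserved, is a faithful and rigorous rendering of exactly that argument.
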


Now we analyze how localization affects
the absorbing probabilities in Grover walks with two boundaries.
To simplify the analysis, we only consider a special case that
the system is initialized to $\ket{0, R}$.
The methods can be applied to more complicated cases smoothly.
Let's define
$l(-M, N) = P_{\underline{-M,0,N}}(0,0,1)$,
$r(-M, N) = Q_{\underline{-M,0,N}}(0,0,1)$.
$l(-M, N)$ and $r(-M, N)$ are the absorbing probabilities
that the walker starts from state $\ket{0, R}$ and is absorbed by the left and the right.
We also define $s(-M, N)$ to be the sum of two absorbing probabilities:
$ s(-M, N) = l(-M, N) + r(-M, N) $.
Table \ref{tab:absorbing-probabilities} shows the values
of $l(-2, N)$, $r(-2, N)$ and $s(-2, N)$ for different right boundaries $N$
($N$ varies from $1$ to $6$).
From Table \ref{tab:absorbing-probabilities},
we observe that $s(-2, N)$ is less than $1$,
which is in sharp contrast to the Hadamard walk case.
In Hadamard walks with two boundaries, it is stated that
$$
\forall M, N \geq 1, \quad s(-M, N) = 1,
\quad (\textrm{\cite{ambainis2001one}, PROPOSITION 9}),
$$
which indicates that the Hadamard walker is absorbed by either the left
or the right boundary after walking enough steps.
However, This equation doesn't hold in Grover walks with two boundaries.
Due to the localization effect, there is some probability that
the walker is trapped around origin.
Let $o(-M,N)$ be the localization probabilities when two boundaries are presented,
then we have a similar equation for Grover walks with two boundaries
$$
\forall M, N \geq 1, \quad s(-M,N) + o(-M,N) = 1.
$$
When the gap between the left and right boundaries becomes larger (i.e., $M+N$ becomes larger),
more localization probabilities are introduced, as these
bounded positions can reserve more probability amplitudes.
Thus the bigger the gap, the larger $o(-M,N)$ and the smaller $s(-M,N)$.

The localization probabilities are exponentially decay in Grover walks,
as stated in~\cite{inui2005one, falkner2014weak}.
This phenomenon is also observed in Grover walk with two boundaries.
Now let's dig deeper on the data given in Table \ref{tab:absorbing-probabilities}.
For arbitrary $N \geq 1$, we define $p(N) = s(-2,N) - s(-2,N+1)$.
Actually, $p(N)$ is the localization probability at position $N$
when the left boundary is at $-2$ and the right boundary is at $N+1$.
As the localization probabilities are rather small,
we scale up $p(N)$ by $10^{12}$ to observe their trends.
In order to show that $p(N)$ is truly exponentially decaying,
we also calculate the $log$ values of $p(N)$.
The values of $p(N)$ and $\log_2(p(N))$ are calculated
in Table \ref{tab:absorbing-probabilities},
and visualized in Figure \ref{fig:figure06}.
From the inset of Figure \ref{fig:figure06}, we can see that
the $log$ values of $p(N)$ decrease linearly with $N$,
which indicates the exponential decay of localization probabilities
in Grover walk with two boundaries.
\renewcommand\arraystretch{1.4}
\begin{table}[!hbt]
  \centering
  \begin{tabular}{|C{1cm}|C{1cm}|C{2.5cm}|C{2.5cm}|C{2.5cm}|C{3.5cm}|C{2cm}|}
  \hline
    $-M$ & $N$  & $l(-M,N)$
                & $r(-M,N)$
                & $s(-M,N)$  & $p(N)$ (scaled by $10^{12}$) & $\log_2(p(N))$ \\
  \hline
    -2 & 1 & 0.1529411765 & 0.4470588235  & 0.6000000000  & 4040404040  & 31.9119     \\
  \hline
    -2 & 2 & 0.1616161616 & 0.4343434343  & 0.5959595960  & 41228613    & 25.2971    \\
  \hline
    -2 & 3 & 0.1619106568 & 0.4340077105  & 0.5959183673  & 420743      & 18.6826    \\
  \hline
    -2 & 4 & 0.1619197226 & 0.4339982240  & 0.5959179466  & 4292        & 12.0674    \\
  \hline
    -2 & 5 & 0.1619199936 & 0.4339979488  & 0.5959179423  & 11          & 3.4594    \\
  \hline
    -2 & 6 & 0.1619200016 & 0.4339979407  & 0.5959179423  & ~           & ~     \\
  \hline
  \end{tabular}
  \caption{The left and right absorbing probabilities when the left boundary
          is at $-2$ and the right boundary varies from $1$ to $6$.
          These probabilities are calculated by Equations
          \ref{eq:two-boundaries-left}-\ref{eq:two-boundaries-right},
          with 10 digits reserved. $s(-M,N)$ is the sum of $l(-M,N)$ and $r(-M,N)$.
          The localization probabilities $p(N)$ are scaled up by $10^{12}$ to
          make them easy to observe. The values of $\log_2(p(N))$ decrease linearly
          with the right boundary $N$.
          }
  \label{tab:absorbing-probabilities}
\end{table}

\section{Conclusion}\label{sec:conclusion}

We analyze in detail the dynamics of Grover walks on a line with one
and two absorbing boundaries in this paper.
Both cases illustrate interesting differences between Grover walks
and Hadamard walks with boundaries.

In the one boundary case, we begin with three special initial states
and define generating functions for these simple cases.
These generating functions have closed form solutions.
Then, we use the solutions to calculate the absorbing probability for arbitrary boundary.
The oscillating localization phenomenon is observed and numerically studied
in Grover walks with one boundary.
It offers a nice explanation for the extreme fast decrease of the
absorbing probabilities when the boundary is moved from $-1$ to $-2$.

We study the two boundaries in almost the same way as what we did in the former case.
Generating functions are defined for three special initial states.
We then derive recurrence solutions to these generating functions.
These solutions are used to solve more complicated cases.
The absorbing probabilities (both the left and right absorbing probability) for
arbitrary left and right boundary can be calculated to arbitrary accuracy
by recursively applying these solutions.
When the left boundary is at $-2$, the quantum walk leads to localization.
We show that the localization probabilities decay exponentially.

Many questions are still left unsolved.
We cannot calculate the absorbing probability
when the boundary approaches infinity in the one boundary case.
In the two boundaries case,
we are unable to analytically calculate the left and right absorbing probabilities
with arbitrary coin states.
A further detailed study on these questions will appear in our forth-coming paper.

\section*{Acknowledgements}

The authors want to thank Malin Zhong, Yanfei Bai, Xiaohui Tian and Qunyong
Zhang for the insightful discussions.
This work are supported by the National Natural Science Foundation of China
(Grant Nos. 61300050, 91321312, 61321491),
the Chinese National Natural Science Foundation of Innovation Team (Grant No. 61321491),
the Research Foundation for the Doctoral Program of Higher Education of China
(Grant No. 20120091120008) and
the Science, Mathematics, and Research for Transformation (SMART) fellowship program.



\begin{appendices}

\section{Recurrences for Generating Functions of One Boundary}
\label{appendix:derive-recurrences}

In Grover walk with one boundary, the left generating function $l(z)$ in Equation
\ref{eq:l-generating-function} can be rewrote by $l(z)$, $s(z)$ and $r(z)$.
The derivation is as follows.
It must be pointed out the derivation is capable of solving three state quantum
walks with arbitrary coin operators, not limited to the Grover operator.
\begin{eqnarray*}
  l(z)  &=& \sum_{t=1}^{\infty} \bra{-1,L}U(\Pi_{no}^{-1}U)^{t-1}\ket{0,L} z^t \\
  ~     &=& \bra{-1,L}U\ket{0,L}z +
        \sum_{t=2}^{\infty} \bra{-1,L}U(\Pi_{no}^{-1}U)^{t-2}\Pi_{no}^{-1}U\ket{0,L} z^t  \\
  ~     &=& - \frac{1}{3}z +
      \frac{2}{3}z\sum_{t=2}^{\infty} \bra{-1,L}U(\Pi_{no}^{-1}U)^{t-2}\ket{0,S} z^{t-1} +
      \frac{2}{3}z\sum_{t=2}^{\infty} \bra{-1,L}U(\Pi_{no}^{-1}U)^{t-2}\ket{1,R} z^{t-1} \\
  ~     &=& - \frac{1}{3}z +
      \frac{2}{3}z\sum_{t=1}^{\infty} \bra{-1,L}U(\Pi_{no}^{-1}U)^{t-1}\ket{0,S} z^t +
      \frac{2}{3}z\sum_{t=1}^{\infty} \bra{-1,L}U(\Pi_{no}^{-1}U)^{t-1}\ket{1,R} z^t \\
  ~     &=& - \frac{1}{3}z + \frac{2}{3}z \cdot s(z) + \frac{2}{3}z \cdot l(z)r(z).
\end{eqnarray*}
By similar arguments, we get three recurrences for these generating functions:
\begin{eqnarray*}
  l(z) &=&  - \frac{1}{3}z + \frac{2}{3}z \cdot s(z) + \frac{2}{3}z \cdot l(z)r(z),  \\
  s(z) &=&    \frac{2}{3}z - \frac{1}{3}z \cdot s(z) + \frac{2}{3}z \cdot l(z)r(z),  \\
  r(z) &=&    \frac{2}{3}z + \frac{2}{3}z \cdot s(z) - \frac{1}{3}z \cdot l(z)r(z).
\end{eqnarray*}
Solving these equations and discarding the solutions that don't have Taylor expansions,
we get the desired answers
\begin{eqnarray*}
  l(z)  &=& \frac{-3 - 4z - 3z^2 + (1+z)\Delta}{2z},    \\
  s(z)  &=& \frac{-3 -z          + \Delta}{2z},   \\
  r(z)  &=& \frac{ 3 + 2z + 3z^2 + (z-1)\Delta}{4z},
\end{eqnarray*}
where $ \Delta = \sqrt{9 + 6z + 9z^2}$.

\section{Recurrences for Generating Functions of Two Boundaries}
\label{appendix:derive-recurrences-two-boundaries}

As in the case of one boundary, the left generating function $l(N, z)$ defined by
Equation \ref{eq:l-generating-function-boundaries} in Grover walk with two boundaries,
can also be represented by $l(N, z)$, $s(N, z)$ and $r(, z)$.
The derivation is as follows.
\begin{eqnarray*}
  l(N, z)   &=&
    \sum_{t=1}^{\infty} \bra{-1,L}U(\Pi_{no}^{N}\Pi_{no}^{-1}U)^{t-1}\ket{0,L} z^t  \\
  ~         &=&
    \bra{-1,L}U\ket{0,L}z +
    \sum_{t=2}^{\infty} \bra{-1,L}U (\Pi_{no}^{N}\Pi_{no}^{-1}U)^{t-2}
                                    (\Pi_{no}^{N}\Pi_{no}^{-1}U)\ket{0,L} z^t.
\end{eqnarray*}
And for any $N \geq 2$, we have
\begin{eqnarray*}
  \bra{-1,L}U\ket{0,L} &=&
        \bra{-1,L}(-\frac{1}{3}\ket{-1,L} + \frac{2}{3}\ket{0,S} + \frac{2}{3}\ket{1,R})
        = -\frac{1}{3},    \\
  \Pi_{no}^{N}\Pi_{no}^{-1}U\ket{0,L} &=&
        \frac{2}{3}\ket{0,S} + \frac{2}{3}\ket{1,R}.
\end{eqnarray*}
Then
\begin{eqnarray*}
  l(N, z)   &=&
    \bra{-1,L}U\ket{0,L}z +
    \sum_{t=2}^{\infty} \bra{-1,L}U (\Pi_{no}^{N}\Pi_{no}^{-1}U)^{t-2}
                                    (\Pi_{no}^{N}\Pi_{no}^{-1}U)\ket{0,L} z^t \\
  ~         &=&
    -\frac{1}{3}z +
    \sum_{t=2}^{\infty} \bra{-1,L}U (\Pi_{no}^{N}\Pi_{no}^{-1}U)^{t-2}
        (\frac{2}{3}\ket{0,S} + \frac{2}{3}\ket{1,R}) z^t               \\
  ~         &=&
    -\frac{1}{3}z +
  \frac{2}{3}z\sum_{t=2}^{\infty} \bra{-1,L}U (\Pi_{no}^{N}\Pi_{no}^{-1}U)^{t-2}\ket{0,S}z^{t-1} +
  \frac{2}{3}z\sum_{t=2}^{\infty} \bra{-1,L}U (\Pi_{no}^{N}\Pi_{no}^{-1}U)^{t-2}\ket{1,R}z^{t-1} \\
  ~         &=&
    -\frac{1}{3}z +
  \frac{2}{3}z\sum_{t=1}^{\infty} \bra{-1,L}U (\Pi_{no}^{N}\Pi_{no}^{-1}U)^{t-1}\ket{0,S}z^t +
  \frac{2}{3}z\sum_{t=1}^{\infty} \bra{-1,L}U (\Pi_{no}^{N}\Pi_{no}^{-1}U)^{t-1}\ket{1,R}z^t.
\end{eqnarray*}
Let's take a closer look at the last two terms.
We have
$$
\sum_{t=1}^{\infty} \bra{-1,L}U (\Pi_{no}^{N}\Pi_{no}^{-1}U)^{t-1}\ket{0,S}z^t = s(N, z)
$$
by definition.
For a walker with initial state $\ket{1, R}$, if it wants to exit from the left boundary $-1$,
it must first reach $0$.
The generating function for reaching $0$ from state $\ket{1, R}$ is $r(N-1, z)$,
while the generating function for reaching $-1$ from state $\ket{0, L}$ is $l(N,z)$.
In a word, we can derive
$$
\sum_{t=1}^{\infty} \bra{-1,L}U (\Pi_{no}^{N}\Pi_{no}^{-1}U)^{t-1}\ket{1,R}z^t
 = l(N,z)r(N-1, z).
$$
Thus
\begin{eqnarray*}
  l(N, z)   &=&
    -\frac{1}{3}z +
  \frac{2}{3}z\sum_{t=1}^{\infty} \bra{-1,L}U (\Pi_{no}^{N}\Pi_{no}^{-1}U)^{t-1}\ket{0,S}z^t +
  \frac{2}{3}z\sum_{t=1}^{\infty} \bra{-1,L}U (\Pi_{no}^{N}\Pi_{no}^{-1}U)^{t-1}\ket{1,R}z^t  \\
  ~         &=&
    -\frac{1}{3}z + \frac{2}{3}z\cdot s(N, z) + \frac{2}{3}z \cdot l(N,z)r(N-1, z).
\end{eqnarray*}
With similar derivation methods, we can get the recurrence equations
for the generating functions $s(N, z)$, $r(N, z)$.

To summarize, we derive three recurrence equations for these generating
functions defined in Grover walk with two boundaries case as
\begin{eqnarray*}
  l(N, z)   &=&     -\frac{1}{3} z + \frac{2}{3} z \cdot s(N, z) +
                    \frac{2}{3} z \cdot l(N, z)r(N-1, z)        \\
  s(N, z)  &=&      \frac{2}{3} z - \frac{1}{3} z \cdot s(N, z) +
                    \frac{2}{3} z \cdot l(N, z)r(N-1, z)        \\
  r(N, z)  &=&      \frac{2}{3} z + \frac{2}{3} z \cdot s(N, z) -
                    \frac{1}{3} z \cdot l(N, z)r(N-1, z)
\end{eqnarray*}
where the initial conditions are $l(0, z) = s(0, z) = r(0, z) = 0$ and $N \geq 0$.

\section{Proof of Theorem 4}
\label{appendix:proof-of-theorem-4}

\def \C {\mathbb C}
\def \cbar {{\overline {\mathbb C}}}
\def \R {\mathbb R}
\def \N {\mathbb N}
\def \Z{\mathbb Z}
\def \Q {\mathbb Q}
\def \D {\mathbb D}
\def \T {\mathbb T}
\def \No {{\mathbb N}_0}

This proof closely follows the argument
laid out in \cite{bach2009absorption} for the proof of Conjecture 11 in \cite{ambainis2001one}.
For the remainder of this section, let
$p_n := P_{\underline{-1,0,N}}(0,0,1)$ and
$r_n(z) := r(N,z)$. Recall the recursion governing $r_n(z)$:
$$
r_{n+1}(z)=\frac{2z(z+1)-z^2(1+3z)r_n(z)}{(z+3)-2z(z+1)r_n(z)},\hspace{0.5cm}r_0(z)=0.
$$

Prior to proving Theorem 4, we provide a few preliminary results:
\begin{proposition}\label{prop:analytical-function}
For $|w|,|z|<1$, we have $|f(w,z)|<1$ where
$$
f(w,z) = \frac{2z(z+1)-z^2(1+3z)w}{(z+3)-2z(z+1)w}.
$$
\end{proposition}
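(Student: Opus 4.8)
The plan is to fix $z$ with $|z|<1$ and study the Möbius transformation $w\mapsto f(w,z)$. Write $a=2z(z+1)$, $b=z^2(1+3z)$, $c=z+3$, so $f(w,z)=\dfrac{a-bw}{c-aw}$, with determinant $a^2-bc=z^2(z-1)^2$, which is nonzero once $0<|z|<1$. If $z=0$ then $f(w,0)\equiv 0$ and there is nothing to prove, so assume $0<|z|<1$; then $f(\cdot,z)$ is a genuine non-constant fractional linear map. The scheme is: (i) show $f(\cdot,z)$ has no pole in the closed disk $\overline{\D}$; (ii) show $|f(w,z)|\le 1$ on the circle $|w|=1$; (iii) deduce $|f(w,z)|<1$ on $\D$ from the maximum modulus principle.

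For (i), the only pole of $f(\cdot,z)$ is $w^{\ast}=c/a=(z+3)/\bigl(2z(z+1)\bigr)$, and $|w^{\ast}|>1$ amounts to $|z+3|>2|z|\,|z+1|$. I would prove this with the substitution $u=z+1$: the hypothesis $|z|<1$ reads $2\,\mathrm{Re}(u)>|u|^2$, and, putting $R=|u|^2=|z+1|^2<4$,
$$
|z+3|^2-4|z|^2|z+1|^2=|u+2|^2-4|u|^2|u-1|^2=-4R^2-3R+4+4\,\mathrm{Re}(u)\,(1+2R)>4-R>0,
$$
where the middle inequality uses $4\,\mathrm{Re}(u)(1+2R)>2R(1+2R)$. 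Hence $f(\cdot,z)$ is holomorphic on an open neighbourhood of $\overline{\D}$, and on $|w|=1$ the bound $|f(w,z)|\le1$ is the same as $|a-bw|^2\le|c-aw|^2$.

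Step (ii) is the core. Expanding $|a-bw|^2\le|c-aw|^2$ and using $|w|^2=1$ turns it into $|c|^2-|b|^2+2\,\mathrm{Re}\bigl[\overline{w}\,(a\overline{b}-c\overline{a})\bigr]\ge 0$ for all $|w|=1$; minimizing over the phase of $w$, this is equivalent to $|c|^2-|b|^2\ge 2\,\bigl|a\overline{b}-c\overline{a}\bigr|$. A computation extracts the factor $1-|z|^2>0$ from both sides:
$$
a\overline{b}-c\overline{a}=-2\overline{z}\,(1-|z|^2)\bigl(z+3\overline{z}+3|z|^2+3\bigr),\qquad |c|^2-|b|^2=(1-|z|^2)\bigl[|z|^2+6\,\mathrm{Re}(z)(1+|z|^2)+9\bigl(1+|z|^2+|z|^4\bigr)\bigr].
$$
After cancelling $1-|z|^2$, squaring, and setting $A=|z|^2$, $x=\mathrm{Re}(z)$ (so $x^2\le A\le 1$), the inequality becomes
$$
\bigl(9A^2+10A+9+6x(1+A)\bigr)^2\ \ge\ 16A\bigl(12x^2+24x(1+A)+9A^2+22A+9\bigr)
$$
on the region $\{\,x^2\le A\le 1\,\}$. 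I would verify this directly: for each fixed $A$ the difference of the two sides is a quadratic in $x$, so it suffices to control its sign at the endpoints $x=\pm\sqrt{A}$ and, when it falls inside $[-\sqrt A,\sqrt A]$, at the vertex. I expect this routine but somewhat tedious polynomial check to be the only genuine obstacle; the positivity of $9A^2+10A+9+6x(1+A)$ (needed before squaring) is immediate from $|x|\le\sqrt A<1$.

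For (iii): $f(\cdot,z)$ is holomorphic on $\overline{\D}$ with $|f|\le1$ on $\partial\D$, so $|f|\le1$ on all of $\overline{\D}$ by the maximum modulus principle; being non-constant, it cannot attain modulus $1$ at an interior point, so $|f(w,z)|<1$ for every $|w|<1$. Combined with the trivial case $z=0$, this gives $|f(w,z)|<1$ whenever $|w|,|z|<1$, as claimed.
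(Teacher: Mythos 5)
Your argument is correct, and although it rests on the same underlying fact as the paper's proof --- that for fixed $z$ the map $w\mapsto f(w,z)$ is a M\"obius transformation --- the route is genuinely different. The paper normalizes $f$ to $\frac{1-h(z)w}{h(1/z)-w}$, invokes the explicit formula for the image of the unit disk under such a map (a disk with known center and radius), and imposes the containment criterion $|\bar b-a|+|1-ab|\le|b|^2-1$; it then reaches its final polynomial inequality only after several lossy intermediate bounds (replacing $f_1$ by $f_1/|z|$ and a triangle-inequality estimate on $|3|z|^2+3z+\bar z+3|$). You instead show the unique pole $w^{*}=(z+3)/(2z(z+1))$ lies outside the closed disk (your $u=z+1$ computation is right: the difference equals $-4R^2-3R+4+4\,\mathrm{Re}(u)(1+2R)>4-R>0$ since $R<4$), bound $|f|$ exactly on $|w|=1$, and finish by the maximum modulus principle; this buys you an \emph{exact} reduction with the factor $1-|z|^2$ extracted identically from both sides --- your formulas for $a\bar b-c\bar a$ and $|c|^2-|b|^2$ check out, as does the target inequality $(9A^2+10A+9+6x(1+A))^2\ge16A\,(12x^2+24x(1+A)+9A^2+22A+9)$ on $\{x^2\le A\le1\}$. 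The one step you leave as a sketch does go through, but the details are worth recording because the constraint $|x|\le\sqrt A$ is genuinely needed: the $x^2$-coefficient of the difference is $12(3A-1)(A-3)$, so the parabola opens upward only for $A<1/3$, and there its vertex $x_v=-(1+A)(9A^2-22A+9)/\bigl(2(3A-1)(A-3)\bigr)$ satisfies $x_v<-3/2$ (since $(1+A)(9A^2-22A+9)-3(3A^2-10A+3)=A(9A^2-22A+17)>0$), hence always lies outside $[-\sqrt A,\sqrt A]$ even though the unconstrained discriminant is positive for small $A>0$. Consequently only the endpoints matter, and with $t=\sqrt A$, $s=t+1/t\ge2$, the difference factors as $3(1-t)^2(3t^2+4t+3)(9t^4+18t^3+26t^2+18t+9)$ at $x=t$ and as $3t^4(3s-2)(3s-4)^2(s+2)$ at $x=-t$, both nonnegative and strictly positive for $0<t<1$. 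With that, your proof is complete and, in my view, tighter and more transparent than the published one.
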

\begin{proof}
Let us rewrite $f(w,z)$ as
$ f(w,z)=\frac{1-h(z)w}{h\left(\frac{1}{z}\right) -w} $
where $h(z)=\frac{z^2(1+3z)}{2z(z+1)}$. Notice that for fixed $z$, the function $g_z(w)=f(w,z)$ is a linear fractional transformation. These transformations map circles to circles. In particular, if $|b|>1$, the map $w\mapsto\frac{1-aw}{b-w}$ maps the unit disk to another disk with center $\frac{\bar{b}-a}{|b|^2-1}$ and radius $\frac{|1-ab|}{|b|^2-1}$. If we want this transformation to map the unit disk back into the unit disk, we require the following inequality to hold:
$$
|\bar{b}-a|+|1-ab|\le |b|^2-1.
$$
Letting $a=h(z)$ and $b=h\left(\frac{1}{z}\right)$, we want to prove
$$
f_1(z):=|z|^2|z-1|^2+2|z|(1-|z|^2)|3|z|^2+3z+\bar{z}+3|\le |z+3|^2-|2z(z+1)|^2=:f_2(z)
$$
for $|z|<1$. Notice that $|b|>1$ in this case. Clearly we have $f_1(z)\le\frac{f_1(z)}{|z|}$, so $\left[ f_3(z):=\frac{f_1(z)}{|z|}\le f_2(z)\right]\Rightarrow\left[ f_1(z)\le f_2(z)\right]$. By triangle inequality and by $|z|\le 1$, we have 
$$
|3|z|^2+3z+\bar{z}+3|\le |z+1||z+3|+2|z|^2\le |z+3||z+1|+2|z|.
$$ If we let $f_4(z):=|z||z-1|^2+2(1-|z|^2)(|z+1||z+3|+2|z|)$, then $[f_4(z)\le f_2(z)]\Rightarrow [f_1(z)\le f_2(z)]$. The inequality $f_4(z)\le f_2(z)$ is equivalent to:
$$
2(1-|z|^2)|z+1||z+3|\le f_2(z)-|z||z-1|^2-2|z|(1-|z|^2).
$$
If $a\in\R$, $x:=|z|$, and $r:=\text{Re }(z)$, then we have $|z+a|^2=x^2+a^2+2ar$. By squaring both sides of the previous inequality and collecting terms, we can show that $f_4(z)\le f_2(z)$ for $|z|\le 1$ is equivalent to 
$$
4(4x^2-3)r^2+4(8x^4-13x^2+9)r+(12x^6-11x^4-42x^2+45)\ge 0,
$$
where $0\le x\le 1$ and $|r|<x$. Since the left hand side is quadratic in $r$, it is easy to prove this to be true. The result follows. 
\end{proof}
\begin{corollary}\label{cor:analytical-function}
For all $n\in\N$, the function $r_n(z)$ is analytic in $|z|<1$.
\end{corollary}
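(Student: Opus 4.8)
The plan is to deduce the corollary from Proposition \ref{prop:analytical-function} by a short induction on $n$, where the key is to strengthen the statement so that the inductive hypothesis supplies exactly what Proposition \ref{prop:analytical-function} needs. Concretely, I would prove by induction that for every $n\in\No$ the function $r_n(z)$ is analytic on the open unit disk $\D$ \emph{and} satisfies $|r_n(z)|<1$ for all $z\in\D$; the claimed analyticity is then the weaker half of this, and the case $n\ge 1$ is what the corollary asserts. The base case $n=0$ is immediate: $r_0(z)\equiv 0$ is entire and $|r_0(z)|=0<1$ on $\D$.

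For the inductive step, suppose $r_n$ is analytic on $\D$ with $|r_n(z)|<1$ there, and recall $r_{n+1}(z)=f(r_n(z),z)$, where
$$
f(w,z)=\frac{2z(z+1)-z^2(1+3z)w}{(z+3)-2z(z+1)w}.
$$
First I would show that the denominator $(z+3)-2z(z+1)w$ does not vanish anywhere in the bidisk $\{|w|<1,\ |z|<1\}$. This follows from Proposition \ref{prop:analytical-function} itself: since $|f(w,z)|<1$ on that bidisk, $f$ is everywhere finite there, hence the rational function $f$ has no pole in the bidisk; and since the numerator and denominator share no zero at any point of the bidisk where the denominator vanishes (equating both to zero forces $(z-1)^2=0$, i.e.\ $z=1\notin\D$), a pole-free $f$ forces the denominator to be zero-free. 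Granting this, substitute the analytic map $z\mapsto r_n(z)$, which sends $\D$ into $\D$ by the inductive hypothesis: then $z\mapsto (z+3)-2z(z+1)r_n(z)$ is analytic and nowhere vanishing on $\D$, so $r_{n+1}(z)$ is a quotient of analytic functions with nonvanishing denominator and is therefore analytic on $\D$. The bound $|r_{n+1}(z)|=|f(r_n(z),z)|<1$ on $\D$ is then precisely Proposition \ref{prop:analytical-function} applied with $w=r_n(z)$, $|w|<1$, $|z|<1$. This closes the induction.

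The one step that genuinely needs care — and the place I would be most careful — is the zero-freeness of the denominator on the bidisk; everything else is bookkeeping about compositions and quotients of analytic functions. It is worth stressing that a crude estimate such as $|(z+3)-2z(z+1)w|\ge |z+3|-2|z|\,|z+1|$ is \emph{not} sufficient, since the right-hand side can turn negative as $|z|\to 1$; the fact must be extracted from Proposition \ref{prop:analytical-function}. An alternative, equivalent route is to observe that the denominator vanishes only when $w=b(z)$ with $b(z)=\tfrac{z+3}{2z(z+1)}=h(1/z)$, and the proof of Proposition \ref{prop:analytical-function} already records that $|b(z)|>1$ for $|z|<1$; since then $|w|=|b(z)|>1$, no zero with $|w|<1$ is possible. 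Either way, the corollary remains a genuine two-sentence consequence of the preceding proposition.
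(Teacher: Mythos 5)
Your proposal is correct and follows exactly the route the paper intends: the paper states this as an unproved corollary of Proposition \ref{prop:analytical-function}, the implicit argument being the induction $r_{n+1}(z)=f(r_n(z),z)$ with $r_0\equiv 0$, which is precisely what you carry out. Your added care about the nonvanishing of the denominator $(z+3)-2z(z+1)r_n(z)$ on $|z|<1$ (via $|h(1/z)|>1$, which the paper's proof of the proposition records as $|b|>1$) is exactly the detail needed to make the corollary rigorous, so nothing is missing.
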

\begin{proposition}\label{prop:rewrite-rz}
For all $n\in N$, we may write $r_n(z)=\frac{p(z)}{q(z)}$
where $\textnormal{deg}(p)=\textnormal{deg}(q)+1$.
\end{proposition}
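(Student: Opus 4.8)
The plan is to prove the statement by induction on $n$, carrying along an explicit representative pair of polynomials. I would set $p_1(z)=2z^2+2z$ and $q_1(z)=z+3$, so that the recursion together with $r_0(z)=0$ gives $r_1(z)=p_1(z)/q_1(z)$, and then define
\begin{align*}
  p_{n+1}(z) &= 2z(z+1)\,q_n(z) - z^2(1+3z)\,p_n(z), \\
  q_{n+1}(z) &= (z+3)\,q_n(z) - 2z(z+1)\,p_n(z).
\end{align*}
Clearing denominators in the recursion $r_{n+1}(z)=\frac{2z(z+1)-z^2(1+3z)r_n(z)}{(z+3)-2z(z+1)r_n(z)}$ shows, by a simultaneous induction, that $q_n\not\equiv 0$ and $r_n(z)=p_n(z)/q_n(z)$ for every $n\ge 1$. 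It then suffices to prove $\deg p_n=\deg q_n+1$ for all $n\ge 1$.

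The base case is clear, since $\deg p_1=2=\deg q_1+1$. For the inductive step, assume $\deg q_n=d$ and $\deg p_n=d+1$, and let $\ell\neq 0$ be the leading coefficient of $p_n$. In $p_{n+1}$ the summand $2z(z+1)q_n$ has degree $d+2$, while $z^2(1+3z)p_n$ has degree $d+4$ with leading coefficient $3\ell\neq 0$; since these degrees differ by $2$ there is no cancellation at the top, so $\deg p_{n+1}=d+4$. Similarly, in $q_{n+1}$ the summand $(z+3)q_n$ has degree $d+1$, while $2z(z+1)p_n$ has degree $d+3$ with leading coefficient $2\ell\neq 0$; again the degrees differ by $2$, so $\deg q_{n+1}=d+3$ (in particular $q_{n+1}\not\equiv 0$, which also closes the auxiliary induction used above). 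Hence $\deg p_{n+1}=d+4=(d+3)+1=\deg q_{n+1}+1$, completing the induction.

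I do not expect a serious obstacle; the content is entirely degree bookkeeping. The two points that need care are (i) checking that $z^2(1+3z)p_n$ and $2z(z+1)p_n$ have nonzero leading coefficients — this is precisely where the inductive hypothesis that $p_n$ is a nonzero polynomial enters, together with the fact that $1+3z$ and $2z(z+1)$ contribute nonzero leading factors — and (ii) observing that the degree gap of $2$ between the two summands of each of $p_{n+1},q_{n+1}$ forbids accidental top-degree cancellation. If one prefers a representation-free formulation, note that $\deg(\textnormal{num})-\deg(\textnormal{den})$ is an invariant of the rational function $r_n$ (it is minus the order of $r_n$ at $\infty$) and that removing a common factor changes both degrees equally; the computation above then shows this invariant equals $1$ for all $n\ge 1$.
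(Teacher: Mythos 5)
Your proof is correct and follows essentially the same route as the paper's: induction on $n$ using the polynomial recursion $p_{n+1}=2z(z+1)q_n-z^2(1+3z)p_n$, $q_{n+1}=(z+3)q_n-2z(z+1)p_n$ and tracking degrees. You are somewhat more careful than the paper in noting that the degree gap of $2$ between the two summands rules out top-degree cancellation and in remarking that $\deg p-\deg q$ is insensitive to common factors, but the underlying argument is identical.
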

\begin{proof}
Clearly $r_1(z)$ satisfies this condition.
Let $r_{n+1}(z)=\frac{p'(z)}{q'(z)}$
and suppose $r_n(z)=\frac{p(z)}{q(z)}$
where $\text{deg}(q)=d$ and $\text{deg}(p)=\text{deg}(q)+1$.
Then $p'(z)=2z(z+1)q(z)-z^2(1+3z)p(z)$ and $q'(z)=(z+3)q(z)-2z(z+1)p(z)$.
This implies $\text{deg}(p')=d+4$ and $\text{deg}(q')=d+3$.
\end{proof}
\begin{proposition}
The functions $r_n(z)$ have the following closed form:
$$  r_n(z)=\frac{2z(z+1)R_n(z)}{R_{n+1}(z)+z^2(1+3z)R_n(z)} $$
where $R_n(z)=\lambda _+^n(z)-\lambda _-^n(z)$ and
$$
\lambda _\pm =\frac{1}{2}\left[ (3+z)-(z^2+3z^3)\pm\sqrt{(3+z+z^2+3z^3)^2-4(2z+2z^2)^2}\right].
$$
\end{proposition}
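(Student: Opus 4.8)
The recursion for $r_n(z)$ iterates, for each fixed $z$, a single linear fractional (Möbius) transformation of its argument, so the plan is to linearize it in the standard way. Writing $r_n(z)=P_n(z)/Q_n(z)$ and clearing denominators, the recursion $r_{n+1}=\dfrac{2z(z+1)-z^2(1+3z)r_n}{(z+3)-2z(z+1)r_n}$ becomes $\binom{P_{n+1}}{Q_{n+1}}=A(z)\binom{P_n}{Q_n}$ with
\[
A(z)=\begin{pmatrix} -z^2(1+3z) & 2z(z+1)\\ -2z(z+1) & z+3\end{pmatrix},
\qquad \binom{P_0}{Q_0}=\binom{0}{1},
\]
the initial vector coming from $r_0=0$. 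The eigenvalues of $A(z)$ are exactly the $\lambda_\pm(z)$ in the statement: $\operatorname{tr}A(z)=(3+z)-(z^2+3z^3)$ by inspection, and the identity $\operatorname{tr}A(z)^2-4\det A(z)=(3+z+z^2+3z^3)^2-4(2z+2z^2)^2$ follows from $\det A(z)=(2z(z+1))^2-(z+3)z^2(1+3z)$ together with the elementary factorization $[(3+z)-(z^2+3z^3)]^2+4(3+z)(z^2+3z^3)=[(3+z)+(z^2+3z^3)]^2$. I would pin down these two algebraic facts first, since everything afterward is purely formal.

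From here I would give the verification by a short induction on $n$, which is cleaner than unwinding $A^n$ explicitly. Abbreviate $a=2z(z+1)$, $b=z^2(1+3z)$, $c=z+3$, and recall that $R_n=\lambda_+^n-\lambda_-^n$ satisfies $R_{n+2}=(\lambda_++\lambda_-)R_{n+1}-\lambda_+\lambda_-R_n$ with $R_0=0$; since $\lambda_++\lambda_-=\operatorname{tr}A=c-b$ and $\lambda_+\lambda_-=\det A=a^2-bc$, this reads $R_{n+2}=(c-b)R_{n+1}+(bc-a^2)R_n$. The base case $n=0$ is immediate from $R_0=0$. Assuming $r_n=\dfrac{aR_n}{R_{n+1}+bR_n}$ and substituting into the recursion, the quantity $R_{n+1}+bR_n$ drops out of the numerator because $(R_{n+1}+bR_n)-bR_n=R_{n+1}$, giving $r_{n+1}=\dfrac{aR_{n+1}}{c(R_{n+1}+bR_n)-a^2R_n}$; so it only remains to check $c(R_{n+1}+bR_n)-a^2R_n=R_{n+2}+bR_{n+1}$, and this rearranges to precisely the recurrence $R_{n+2}=(c-b)R_{n+1}+(bc-a^2)R_n$ just recorded. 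Equivalently, from $A^n=\tfrac{R_n}{R_1}A-\det A\,\tfrac{R_{n-1}}{R_1}I$ one reads off $P_n=\tfrac{a}{R_1}R_n$ and $Q_n=\tfrac{1}{R_1}\bigl(cR_n-\det A\cdot R_{n-1}\bigr)$, and then $-\det A\cdot R_{n-1}=R_{n+1}-\operatorname{tr}A\cdot R_n$ with $c-\operatorname{tr}A=b$ collapses $Q_n$ to $\tfrac{1}{R_1}(R_{n+1}+bR_n)$, so the common factor $R_1$ cancels in $r_n=P_n/Q_n$.

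The main obstacle is nothing more than the bookkeeping of the discriminant identity and the matching of $\operatorname{tr}A,\det A$ with $\lambda_\pm$; once those hold, the induction is formal. Two minor points should be noted in passing. First, $R_n$ depends on the chosen branch of the square root defining $\lambda_\pm$, but interchanging $\lambda_+$ and $\lambda_-$ replaces every $R_n$ by $-R_n$ at once, so the quotient $\dfrac{aR_n}{R_{n+1}+bR_n}$ is branch-independent; equivalently one may work throughout with the honest polynomials $R_n/R_1$, which obey the same recurrence. Second, the identity holds as an equality of rational functions of $z$, hence is insensitive to isolated zeros of $R_{n+1}+bR_n$, and it is consistent with $r_n$ being analytic on $|z|<1$ (Corollary \ref{cor:analytical-function}) and, upon writing it as a ratio of the polynomials $a\,(R_n/R_1)$ and $(R_{n+1}/R_1)+b\,(R_n/R_1)$ — whose two top-degree terms cancel — with the degree relation $\deg p=\deg q+1$ of Proposition \ref{prop:rewrite-rz}.
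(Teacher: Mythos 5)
Your proof is correct and follows essentially the same route as the paper: both linearize the M\"obius recursion by passing to the matrix $M(z)=\begin{pmatrix} -z^2(1+3z) & 2z(z+1)\\ -2z(z+1) & z+3\end{pmatrix}$ acting on numerator--denominator pairs and identify $\lambda_\pm$ as its eigenvalues via the trace, determinant, and discriminant identity. The only divergence is in the final step --- the paper multiplies out $M^n v_0$ explicitly through the eigenvector matrix, whereas you verify the closed form by induction on the three-term recurrence $R_{n+2}=(\operatorname{tr}M)\,R_{n+1}-(\det M)\,R_n$ (equivalently via Cayley--Hamilton), which is a slightly cleaner bookkeeping of the same computation; your added remarks on branch-independence of $R_n$ and consistency with the degree count of Proposition \ref{prop:rewrite-rz} are correct.
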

\begin{proof}
If we let $r_{n}(z)=\frac{p_n(z)}{q_n(z)}$ (not to be confused with the absorption probability $p_n$), then we have the following relation:
$$\frac{p_{n+1}(z)}{q_{n+1}(z)}=\frac{2z(z+1)q_n(z)-z^2(1+3z)p_n(z)}{(z+3)q_n(z)-2z(z+1)p_n(z)}.$$
As such, if we let $v_n(z)=\begin{bmatrix} p_n(z) \\ q_n(z)\end{bmatrix}$,
we have that $v_n(z)=(M(z))^nv_0(z)$ where $v_0=\begin{bmatrix} 0 \\ 1\end{bmatrix}$ and
$$M(z)=\begin{bmatrix} -z^2(1+3z) & 2z(z+1) \\ -2z(z+1) & z+3\end{bmatrix}.$$
We compute $M^n$ using an eigenvalue expansion:
\begin{align*}
M^nv_0 &= \frac{1}{d}\begin{bmatrix} 2z(z+1) & 2z(z+1) \\ \lambda _++z^2(1+3z) & \lambda _-+z^2(1+3z)\end{bmatrix}\begin{bmatrix} \lambda _+^n & 0 \\ 0 & \lambda _-^n\end{bmatrix}\begin{bmatrix} \lambda _-+z^2(1+3z) & -2z(z+1) \\ -\lambda _+-z^2(1+3z) & 2z(z+1)\end{bmatrix}\begin{bmatrix}0 \\ 1\end{bmatrix} \\
	&= \frac{1}{d}\begin{bmatrix} 2z(z+1) & 2z(z+1) \\ \lambda _++z^2(1+3z) & \lambda _-+z^2(1+3z)\end{bmatrix}\begin{bmatrix} \lambda _+^n & 0 \\ 0 & \lambda _-^n\end{bmatrix}\begin{bmatrix} -2z(z+1) \\ 2z(z+1)\end{bmatrix} \\
	&= \frac{1}{d}\begin{bmatrix} 2z(z+1) & 2z(z+1) \\ \lambda _++z^2(1+3z) & \lambda _-+z^2(1+3z)\end{bmatrix}\begin{bmatrix} -2z(z+1)\lambda _+^n \\ 2z(z+1)\lambda _-^n\end{bmatrix} \\
	&= \frac{1}{d}\begin{bmatrix} 4z^2(z+1)^2(\lambda _-^n-\lambda _+^n) \\ 2z(z+1)(\lambda _-^{n+1}-\lambda _+^{n+1})+2z^3(1+3z)(z+1)(\lambda _-^n-\lambda _+^n)\end{bmatrix}
\end{align*}
Here, $d$ is the determinant of this eigenvector matrix and $\lambda _\pm(z)$ are the eigenvalues of $M$. We may compute $\lambda _\pm$ to be as they are above. If we let $R_n(z)=\lambda _+^n(z)-\lambda _-^n(z)$ and take the quotient of the two entries of $M^nv_0$, our closed form expression for $r_n(z)$ results.
\end{proof}

It is worthwhile to note that though the closed form expression of $r_n$ involves square roots, it is still a rational function by its recursive form.

\begin{proposition}\label{prop:relation}
We have the following relation:
$$
\frac{1}{z}r_n(z)r_n\left(\frac{1}{z}\right) =
\frac{2}{3z^2-2z+3}\left[ r_n(z)+r_n\left(\frac{1}{z}\right)\right].
$$
\end{proposition}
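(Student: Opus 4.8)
The plan is to prove the identity by induction on $n$, working throughout at the level of rational functions of $z$ so that the substitution $z\mapsto 1/z$ is legitimate (away from $z=0$ and the poles). It is convenient to abbreviate
\[
A:=2z(z+1),\qquad B:=z^{2}(1+3z),\qquad C:=z+3,\qquad W:=3z^{2}-2z+3,
\]
so that the governing recursion becomes $r_{n+1}(z)=\dfrac{A-B\,r_n(z)}{C-A\,r_n(z)}$ with $r_0(z)=0$. The one algebraic fact that makes everything work is the elementary factorization
\[
(z+1)\,W=(z+1)(3z^{2}-2z+3)=3z^{3}+z^{2}+z+3=B+C,
\]
which, combined with $A=2z(z+1)$, yields the key relation $WA=2z(B+C)$ (and its consequences $WA^{2}=2zA(B+C)$ and $\tfrac{4z^{2}A(B+C)}{W}=2zA^{2}$).

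For the base case $n=0$ both sides of the asserted identity vanish since $r_0\equiv 0$; as a sanity check one may also verify $n=1$ directly, where $r_1(z)=A/C$ and, after cancelling an overall factor $z^{3}$, $r_1(1/z)=A/B$, so the claim reduces to a one-line computation. For the inductive step, assume the identity for $n$ and write $u:=r_n(z)$, $v:=r_n(1/z)$, so the hypothesis reads $W\,uv=2z(u+v)$. Substituting $1/z$ for $z$ in the recursion and multiplying numerator and denominator by $z^{3}$ gives
\[
r_{n+1}\!\left(\tfrac1z\right)=\frac{2z(1+z)-(z+3)v}{z^{2}(1+3z)-2z(1+z)v}=\frac{A-C\,v}{B-A\,v},\qquad r_{n+1}(z)=\frac{A-B\,u}{C-A\,u}.
\]
Clearing denominators in the target relation $W\,r_{n+1}(z)\,r_{n+1}(1/z)=2z\bigl(r_{n+1}(z)+r_{n+1}(1/z)\bigr)$ by multiplying through by $(C-Au)(B-Av)$ turns it into the polynomial identity
\[
W(A-Bu)(A-Cv)=2z\bigl[(A-Bu)(B-Av)+(A-Cv)(C-Au)\bigr].
\]

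Now I expand both sides, collect the coefficients of $1$, $u$, $v$, $uv$, and use the inductive hypothesis in the form $uv=\tfrac{2z}{W}(u+v)$ to eliminate the $uv$ terms. After grouping, the difference of the two sides is a rational-function multiple of $2z(B+C)-WA$---indeed, using the consequence $\tfrac{4z^{2}A(B+C)}{W}=2zA^{2}$ it can be written as $\bigl(2z(B+C)-WA\bigr)\bigl(Bu+Cv-A\bigr)$---and this vanishes identically by the key relation $WA=2z(B+C)$. Hence the identity holds for $n+1$, completing the induction.

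There is no conceptual obstacle; the only real work is organizing the expansion so that this collapse is visible. The two places to be careful are (i) clearing the factor $z^{3}$ that turns $r_{n+1}(1/z)$ into $\dfrac{A-Cv}{B-Av}$ (note that the roles of $B$ and $C$ are interchanged relative to $r_{n+1}(z)$), and (ii) recognizing that after substituting the inductive hypothesis every residual term is proportional to $2z(B+C)-WA$, so the whole argument reduces to the single polynomial identity $(z+1)(3z^{2}-2z+3)=B+C$. As an alternative one could instead start from the matrix representation $v_n(z)=(M(z))^{n}v_0$ established above: a short computation gives $z^{3}M(1/z)=-J\,M(z)\,J$ with $J=\bigl(\begin{smallmatrix}0&1\\1&0\end{smallmatrix}\bigr)$, which expresses $r_n(1/z)$ through the first column of $M(z)^{n}$, and inserting the closed forms in terms of $R_n,R_{n+1}$ reduces the claim to a polynomial identity in $R_n$ and $R_{n+1}$; this route is heavier computationally, so the induction above is the one I would carry out.
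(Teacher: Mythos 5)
Your proof is correct, and it takes a genuinely different route from the paper's. The paper proves this identity by passing through the closed form $r_n(z)=\frac{2z(z+1)R_n(z)}{R_{n+1}(z)+z^2(1+3z)R_n(z)}$ with $R_n=\lambda_+^n-\lambda_-^n$: it uses the symmetry $\lambda_\pm(1/z)=-z^{-3}\lambda_\mp(z)$ to express $r_n(1/z)$ in terms of $R_n,R_{n+1}$, and then obtains the relation from a partial-fraction decomposition of $\frac{1}{z}r_n(z)r_n(1/z)$. You instead induct directly on the M\"obius recursion: with $A=2z(z+1)$, $B=z^2(1+3z)$, $C=z+3$, $W=3z^2-2z+3$, the substituted recursion $r_{n+1}(1/z)=\frac{A-Cv}{B-Av}$ is computed correctly, and after clearing denominators the difference of the two sides of the target identity collapses to $(A^2-BC)\bigl[2z(u+v)-Wuv\bigr]$, which vanishes by the inductive hypothesis; the only external input is the factorization $(z+1)W=B+C$, i.e.\ $WA=2z(B+C)$. (Your parenthetical factored form of the residual is stated a bit loosely, but the cancellation itself is right --- I have checked the expansion.) What each approach buys: your induction makes the proposition self-contained, needing neither the eigenvalue diagonalization nor the closed form of $r_n$, and is arguably the more elementary argument; the paper's route is shorter on the page because it reuses machinery already built for the preceding propositions and produces the explicit expression for $r_n(1/z)$ along the way. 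Either proof would serve.
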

\begin{proof}
First, let us compute $r_n\left(\frac{1}{z}\right)$. Note that $\lambda _\pm\left(\frac{1}{z}\right) =-\frac{1}{z^3}\lambda _\mp (z)$. From this it follows that $R_n\left(\frac{1}{z}\right) =-\left( -\frac{1}{z^3}\right) ^nR_n(z)$. Using this substitution we find $r_n\left(\frac{1}{z}\right)=\frac{2z(z+1)R_n(z)}{(z+3)R_n(z)-R_{n+1}(z)}$. Now consider the expression from the proposition:
$$
I_n(z):=\frac{1}{z}r_n(z)r_n\left(\frac{1}{z}\right)=
\frac{4z(z+1)^2R_n^2}{(R_{n+1}+z^2(1+3z)R_n)((z+3)R_n-R_{n+1})}.
$$
Let us assume the following partial fractions expansion:
$$
I_n(z)=\frac{a(z)}{R_{n+1}+z^2(1+3z)R_n}+\frac{b(z)}{((z+3)R_n-R_{n+1})}.
$$
This allows us to write:
$$
\left[ b(z)-a(z)\right] R_{n+1}(z)+\left[ (z+3)a(z)+z^2(1+3z)b(z)\right] R_n(z)
= 4z(z+1)^2R_n(z)^2.
$$
This equation holds if the following system has satisfied:
$$  b(z)-a(z)=0, \mbox{and} $$
$$  \left[ (z+3)a(z)+z^2(1+3z)b(z)\right]R_n(z)=4z(z+1)^2R_n(z)^2. $$
These equations imply
$
a(z)=b(z)=\frac{R_n(z)}{3z^2-2z+3}.
$
Plugging this back into $I_n(z)$ gives us the desired relation.
\end{proof}
\begin{proposition}\label{prop:imaginary}
For all $n\in\N$, $r_n(\omega )$ is purely imaginary
where $\omega =\frac{1}{3}+\frac{2\sqrt{2}}{3}i$.
\end{proposition}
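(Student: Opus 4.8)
The plan is to derive the statement from the functional identity of Proposition~\ref{prop:relation}, exploiting the fact that $\omega$ is exactly the zero of the polynomial $3z^2-2z+3$ that lies on the unit circle. First I would record that $|\omega|^2=\tfrac19+\tfrac89=1$ and $\operatorname{Re}\omega=\tfrac13$, so that $1/\omega=\bar\omega$, and that for any $z$ with $|z|=1$ one has $3z^2-2z+3=z\bigl(3z+3\bar z-2\bigr)=z\bigl(6\operatorname{Re}z-2\bigr)$; evaluating at $z=\omega$ gives $3\omega^2-2\omega+3=0$ (equivalently, one may just substitute $\omega=\tfrac13+\tfrac{2\sqrt2}{3}i$ directly).

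Next I would establish two structural facts about the rational functions $r_n$. (i) Since $r_0\equiv0$ and the recursion $r_{n+1}(z)=\dfrac{2z(z+1)-z^2(1+3z)r_n(z)}{(z+3)-2z(z+1)r_n(z)}$ has polynomial coefficients with real (indeed integer) entries, an easy induction shows that each $r_n$ is a ratio of real polynomials; hence $r_n(\bar z)=\overline{r_n(z)}$ wherever it is defined, and in particular $r_n(1/\omega)=r_n(\bar\omega)=\overline{r_n(\omega)}$. (ii) Each $r_n$ is finite at every point of the closed unit disk $\overline{\D}$: by Proposition~\ref{prop:analytical-function} and induction (base case $r_0\equiv0$), using $r_{n+1}(z)=f(r_n(z),z)$, one gets $|r_n(z)|<1$ for $|z|<1$, and a rational function bounded on the open disk cannot have a pole on $\overline{\D}$, since a boundary pole $z_0$ would force $|r_n(rz_0)|\to\infty$ as $r\to1^-$. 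In particular $r_n(\omega)$ and $r_n(\bar\omega)$ are well-defined finite numbers.

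Then I would clear denominators in Proposition~\ref{prop:relation} to obtain the identity of rational functions
$$\frac{3z^2-2z+3}{z}\,r_n(z)\,r_n\!\left(\tfrac1z\right)=2\Bigl[r_n(z)+r_n\!\left(\tfrac1z\right)\Bigr],$$
which, by fact (ii) and $\omega\neq0$, may legitimately be evaluated at $z=\omega$. Since $3\omega^2-2\omega+3=0$, the left-hand side vanishes, so $r_n(\omega)+r_n(1/\omega)=0$; combining this with $r_n(1/\omega)=\overline{r_n(\omega)}$ from fact (i) gives $2\operatorname{Re}r_n(\omega)=0$, i.e. $r_n(\omega)$ is purely imaginary.

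The only genuinely delicate point is justifying the evaluation of the cleared identity at the boundary value $\omega$, which is handled by the boundedness argument in (ii); this is also exactly what singles out $\omega$, the unit-circle root of $3z^2-2z+3$. A self-contained alternative that avoids Proposition~\ref{prop:relation} is a direct induction: writing $A=2\omega(\omega+1)$, $B=\omega^2(1+3\omega)$, $C=\omega+3$ and using $3\omega^2-2\omega+3=0$, one checks $A\bar C=\bar AB=8\sqrt2\,i$ and $B\bar C=-12$, so that for $r_n(\omega)=it$ with $t\in\R$ the number $r_{n+1}(\omega)=\dfrac{A-iBt}{C-iAt}$ satisfies $\operatorname{Re}\bigl[(A-iBt)\overline{(C-iAt)}\bigr]=\operatorname{Re}(A\bar C)+\operatorname{Im}(B\bar C)\,t+\operatorname{Re}(\bar AB)\,t^2=0$, hence is again purely imaginary, with finiteness of the denominator again following from (ii).
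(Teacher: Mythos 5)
Your proof is correct, and your primary argument takes a genuinely different route from the paper's. The paper simply substitutes $\omega$ into the recursion, observes that it collapses to the M\"obius map $r_{n+1}(\omega)=\bigl(\tfrac{2\sqrt2}{3}i+r_n(\omega)\bigr)/\bigl(1-\tfrac{2\sqrt2}{3}i\,r_n(\omega)\bigr)$, which visibly sends $i\R$ to $i\R$, and concludes by (implicit) induction from $r_0(\omega)=0$ --- essentially the ``self-contained alternative'' you sketch at the end, whose coefficient identities $A\bar C=\bar AB=8\sqrt2\,i$ and $B\bar C=-12$ check out. Your main argument instead clears denominators in Proposition~\ref{prop:relation} and evaluates at $z=\omega$, using that $\omega$ is precisely the unit-circle root of $3z^2-2z+3$ to force $r_n(\omega)+r_n(1/\omega)=0$, and then $1/\omega=\bar\omega$ together with realness of the coefficients of $r_n$ to get $\operatorname{Re}r_n(\omega)=0$. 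This is valid: the only delicate step, evaluating a rational-function identity at a boundary point, is properly handled by your observation that $|r_n|<1$ on the open disk (Proposition~\ref{prop:analytical-function} plus induction) rules out poles of $r_n$ on the closed disk, so both sides of the cleared identity are finite at $\omega$ and $\bar\omega$. What your route buys is an explanation of \emph{why} $\omega$ is the distinguished point --- it is where the weight $2/(3z^2-2z+3)$ in the functional equation blows up --- at the cost of invoking the heavier Propositions~\ref{prop:analytical-function} and~\ref{prop:relation}; the paper's induction is more elementary and self-contained (one might only add that the denominator $3+2\sqrt2\,t$ is nonzero for $t=-i\,r_n(\omega)\in(-1,1)$, which your boundedness fact also supplies). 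Either proof is acceptable.
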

\begin{proof}
This is immediately apparent by plugging $\omega$ into the recursion relation:
$
r_{n+1}(\omega ) =
\frac{2\sqrt{2}i-3r_n(\omega )}{3-2\sqrt{2}ir_n(\omega )}.
$
\end{proof}

We are now ready to state the proof of Theorem \ref{thm:recurrence}.
\begin{proof}
From \cite{titchmarsh1979theory},
we note the following representation governing the Hadamard product of two functions $f$ and $g$ (with radius of convergence $R$ and $R'$ respectively):
$$
(f\odot g)(z)=\frac{1}{2\pi i}\int _C\frac{1}{w}f(w)g\left(\frac{z}{w}\right) dw.
$$
Here, $C$ is a contour for which $|w|<R$ and $|\frac{z}{w}|<R'$. Clearly we have $p_n=(r_n\odot r_n)(1)$, so we may write:
$$
p_n=\frac{1}{2\pi i}\int _{|z|=1}\frac{1}{z}r_n(z)r_n\left(\frac{1}{z}\right) dz.
$$
By Corollary \ref{cor:analytical-function}, if $z_0$ is a pole of $r_n(z)$,
then it follows that $|z_0|>1$. Similarly, if $z_0$ is a pole of $r_n\left(\frac{1}{z}\right)$, then $|z_0|<1$. Thus, for every $n\in\N$ there exists an $\epsilon >0$ such that:
$$
p_n=\frac{1}{2\pi i}\int _{|z|=1+\epsilon}\frac{1}{z}r_n(z)r_n\left(\frac{1}{z}\right) dz.
$$
We substitute our relation from Proposition \ref{prop:relation}:
$$
p_n=\frac{1}{\pi i}\int _{|z|=1+\epsilon}\frac{r_n(z)}{3z^2-2z+3}dz+\frac{1}{\pi i}\int _{|z|=1+\epsilon}\frac{r_n\left(\frac{1}{z}\right)}{3z^2-2z+3}dz.
$$
Let $\omega =\frac{1}{3}+\frac{2\sqrt{2}}{3}i$ and $\bar{\omega}$ be roots of the equation $3z^2-2z+3=0$, and note that $|\omega |=|\bar{\omega}|=1$. The right integrand has all poles within the contour, and is a rational function with representation $\frac{p(z)}{q(z)}$ where $\text{deg}(q)=\text{deg}(p)+3$ by Proposition \ref{prop:rewrite-rz}.
A result from complex analysis tells us that for a rational function $\frac{p'(z)}{q'(z)}$ with $\text{deg}(q')>\text{deg}(p')+1$, the sum of the residues vanishes. As such, we are left with
$$
p_n=\frac{1}{\pi i}\int _{|z|=1+\epsilon}\frac{r_n(z)}{3z^2-2z+3}dz.
$$
The only poles enclosed by the contour are $\omega$ and $\bar{\omega}$. By Cauchy's integral formula, we thus have:
$$
p_n=\frac{i}{2\sqrt{2}}\left( r_n(\bar{\omega})-r_n(\omega )\right).
$$
By Proposition \ref{prop:imaginary}, $r_n(\omega )$ is purely imaginary.
Moreover, since $r_n$ is a rational function with real coefficients, we have $r_n(\bar{z})=\overline{r_n(z)}$. This implies the following simplification:
$
p_n=-\frac{i}{\sqrt{2}}r_n(\omega).
$
We can now readily prove the recurrence:
\begin{align*}
p_{n+1} &= -\frac{i}{\sqrt{2}}r_{n+1}(\omega ) \\
	&= -\frac{i}{\sqrt{2}}\left(\frac{2\omega (\omega +1)-\omega ^2(1+3\omega )r_n(\omega )}{(\omega +3)-2\omega (\omega +1)r_n(\omega )}\right) \\
	&= -\frac{i}{\sqrt{2}}\left(\frac{\left(\frac{-8+20\sqrt{2}i}{9}\right) +\left(\frac{10+2\sqrt{2}i}{3}\right) r_n(\omega )}{\left(\frac{10+2\sqrt{2}i}{3}\right) +\left(\frac{8-20\sqrt{2}i}{9}\right) r_n(\omega )}\right) \\
	&= -\frac{i}{\sqrt{2}}\left(\frac{\frac{2\sqrt{2}}{3}i+r_n(\omega )}{1-\frac{2\sqrt{2}}{3}ir_n(\omega )}\right) \\
	&= -\frac{i}{\sqrt{2}}\left(\frac{\frac{2\sqrt{2}}{3}i+i\sqrt{2}p_n}{1+\frac{4}{3}p_n}\right) \\
	&= \frac{2+3p_n}{3+4p_n}.
\end{align*}

\end{proof}

\end{appendices} 

\end{document}